\definecolor{lgray}{gray}{0.6}
\definecolor{dgray}{gray}{0.35}
\newtheorem{rules}{}
\newtheorem*{theorem*}{Theorem}
\title{Stabilization Bounds for Influence Propagation from a Random Initial State}
\author{Pál Andras Papp}{ETH Zürich, Switzerland}{apapp@ethz.ch}{}{}
\author{Roger Wattenhofer}{ETH Zürich, Switzerland}{wattenhofer@ethz.ch}{}{}
\authorrunning{P.A. Papp and R. Wattenhofer}
\keywords{Majority process, Minority process, Stabilization time, Random initialization, Asynchronous model}
\begin{document}

\maketitle

\vspace{12pt}

\begin{abstract}
We study the stabilization time of two common types of influence propagation. In majority processes, nodes in a graph want to switch to the most frequent state in their neighborhood, while in minority processes, nodes want to switch to the least frequent state in their neighborhood. We consider the sequential model of these processes, and assume that every node starts out from a uniform random state.

We first show that if nodes change their state for any small improvement in the process, then stabilization can last for up to $\Theta(n^2)$ steps in both cases. Furthermore, we also study the proportional switching case, when nodes only decide to change their state if they are in conflict with a $\frac{1+\lambda}{2}$ fraction of their neighbors, for some parameter $\lambda \in (0,1)$. In this case, we show that if $\lambda < \frac{1}{3}$, then there is a construction where stabilization can indeed last for $\Omega(n^{1+c})$ steps for some constant $c>0$.
On the other hand, if $\lambda > \frac{1}{2}$, we prove that the stabilization time of the processes is upper-bounded by $O(n \cdot \log{n})$.
 \end{abstract}

\section{Introduction}

Dynamically changing colorings in a graph can be used to model various situations when entities of a network are in a specific state, and they occasionally decide to change their state based on the states of their neighbors. Such colorings are essentially a form of distributed automata, where the nodes can represent anything from brain cells to rival companies; as such, the study of these processes has applications in almost every branch of science.

One prominent example of such colorings is a \textit{majority process}, where each node wants to switch to the color that is most frequent in its neighborhood. These processes are used to model a wide range of phenomena in social sciences, e.g. the spreading of political opinions in social networks, or the adoption of different social media platforms \cite{MajApplic4, MajApplic2, MajApplic5}.

Another example is the dual setting of a \textit{minority process}, where each node wants to switch to the least frequent color among its neighbors. Minority processes can model settings where nodes would prefer to differentiate from each other, e.g. frequency selection in wireless networks, or selecting a production strategy in a market economy \cite{MinApplic4, KPRanticoor, MinApplic3}.

In our paper, we analyze the stabilization time of majority and minority processes, i.e. the number of steps until no node wants to change its color anymore. We study the processes in the \textit{sequential} (or asynchronous) model, where in every step, exactly one node switches its color. As such, stabilization time in the sequential model describes the total number of switches before the process terminates.

Compared to a synchronous setting, the sequential model has the advantage that neighbors are never switching at the exact same time; this prevents the process from ending up in an infinitely repeating periodic pattern. This property is indeed a reasonable assumption in many application areas, including the examples mentioned above: you are highly unlikely to e.g. switch your wireless frequency at the exact same time as your neighbors, or change your political opinion at the exact same time as your friends.

We study the maximal stabilization time of the processes in general graphs, assuming that the initial coloring of nodes is chosen uniformly at random. This setting may be relevant for a worst-case analysis in applications where the only thing we can influence is the initial coloring. For example, a wireless service provider might have no control over the topology of the network or the times when clients decide to switch their frequency, but it could easily ensure that its devices are initialized with a randomly chosen frequency.

An important parameter of the model is the switching rule, i.e. the threshold at which a node decides to switch to the opposite color. Two very natural rules are (i) basic switching, when nodes decide to switch for any small improvement, and (ii) proportional switching, when we have a real parameter $\lambda \in (0,1)$, and nodes only change their color if they are motivated to switch by a $\frac{1+\lambda}{2}$ fraction of their neighborhood. 

In our paper, we study the stabilization time for both basic and proportional switching. As a warm-up (in Section \ref{sec:basic}), we first show that in case of basic switching, both minority and majority processes can take $\Omega(n^2)$ steps to stabilize with high probability, matching a naive upper bound of $O(n^2)$. This follows from an extension of the lower-bound construction in \cite{minority} to the random-initialized case.

Our main contributions (Sections \ref{sec:lower} and \ref{sec:upper}) are stabilization bounds in case of proportional switching:

\begin{itemize}

\setlength\topsep{-12pt}
\setlength\itemsep{6pt}

\item for proportional switching with $\lambda < \frac{1}{3}$, we present a construction that w.h.p. exhibits a superlinear stabilization time of $\Omega(n^{1+c})$ for a constant $c>0$ that depends on $\lambda$.

\item for proportional switching with $\lambda > \frac{1}{2}$, we show that w.h.p. the process always stabilizes in $O(n \cdot \log{n})$ steps, essentially matching a straightforward lower bound of $\Omega(n)$.

\vspace{10pt}

\end{itemize}

\section{Related work}

Majority and minority processes have been extensively studied from numerous different perspectives since the early 1980s \cite{Goles, Goles2}. Most of the results focus on the simplest case of two colors, since this already captures the interesting properties of the process, and a generalization to more colors is often straightforward.

Many different variants of these processes have been inspired by application areas ranging from particle physics to social science, as in case of e.g. Ising systems or the voter model \cite{ising, votermodel}. In particular, there is extensive literature on more sophisticated process definitions that aim to provide a more realistic model for a specific application, such as social opinion dynamics or virus infection spreading \cite{SocialGen1, SocialGen2, Infection, useless}.

In case of majority processes, there is a particular interest in analyzing how a small set of nodes can influence the final state \cite{propDynamos1, propDynamos2, MajOther1, Schoenebeck, extra1}. For both processes, there are also numerous works on the analysis of stable states \cite{hedetniemi, SGPsurvey, KPRanticoor, votingtime, approx0}. However, in contrast to our work, most of these earlier results assume a synchronous setting, and only study the process on specific graph topologies, e.g. cliques, grids or Erdős-Rényi random graphs.

There is a recent line of work on stabilization time in general graphs; however, these results assume a worst-case initial coloring. For basic switching, the work of \cite{majority} shows that in the sequential adversarial and synchronous models, stabilization can last for $\widetilde{\Omega}(n^2)$ steps, matching a straightforward upper bound of $O(n^2)$. A similar lower bound is known for minority processes \cite{minority}. On the other hand, the two processes exhibit very different behavior in a benevolent sequential model: majority processes always stabilize in $O(n)$ time, while minority processes can last for quadratically many steps \cite{majority, minority}.

On the other hand, if we consider general graphs with proportional switching, then the sequential processes are known to exhibit a worst-case runtime between quadratic and linear, depending on the parameter $\lambda$ of the switching rule \cite{prop}. Stabilization time in this case is characterized by a non-elementary function $f(\lambda)$ that monotonically and continuously decreases from $1$ to $0$ on the interval $[0,1]$. The results of \cite{prop} show that for any $\varepsilon>0$, stabilization time is upper-bounded by $O(n^{1+f(\lambda)+\varepsilon})$, and the process can indeed last for $\Omega(n^{1+f(\lambda)-\varepsilon})$ steps. Our results are an interesting contrast to this, showing that if we randomize the initial state, then the process can only take $\Omega(n^{1+c})$ steps for smaller $\lambda$ values.

For general weighted graphs and a worst-case initial coloring, an exponential lower bound has also been shown for both majority \cite{majorityW} and minority \cite{minorityW} processes.

There are also various works that assume a randomized initial coloring, but these results focus on special classes of graphs. For majority processes, stabilization time from a randomized initial state has been analyzed in Erdős-Rényi random graphs, grids, tori and expanders \cite{majOther2, Ahad2018, extra2, extra3}. For minority processes, the works of \cite{CA1, CA2, CA3} study stabilization in cliques, cycles, trees and tori. As such, to our knowledge, stabilization time from a randomized initial coloring has not yet been studied in general graphs.

\section{Model definition and tools}

\subsection{Preliminaries}

We study the processes on simple, unweighted, undirected graphs $G(V,E)$ with node set $V$ and edge set $E$. We denote the nodes of the graph by $u$ or $v$, and the number of nodes in the graph by $n$. For a specific node $v$, we denote the neighborhood of $v$ by $N(v)$, and the degree of $v$ by $d_v=|N(v)|$. For ease of presentation, we usually define the size of our graph constructions in terms of an (almost) linear parameter $m$, and in the end, we select a value of $m$ that ensures $m \in \widetilde{\Theta}(n)$.

As common in this area, we focus on the case of two colors. That is, we say that a \textit{coloring} of the graph is a function $\gamma: V \rightarrow \{\text{black, white}\}$. For a specific coloring $\gamma$, we define $N_s(v)=\{ u \in N(v) \, | \, \gamma(v) = \gamma(u)\}$ as the neighbors of $v$ with the same color, and $N_o(v)=\{ u \in N(v) \, | \, \gamma(v) \neq \gamma(u)\}$ as the neighbors of $v$ with the opposite color.

We use the concept of \textit{conflicts} to define both majority and minority processes in a general form. We say that there is a \textit{conflict} on the edge $(u,v)$ if this edge motivates $v$ to change its color; more formally, if $u \in N_o(v)$ in case of a majority process, and if $u \in N_s(v)$ in case of a minority process. We use $N_c(v)$ to denote the conflicting neighbors of $v$ under $\gamma$, i.e. $N_c(v)=N_o(v)$ for majority and $N_c(v)=N_s(v)$ for minority.

Given a specific coloring $\gamma$, we say that node $v$ is \textit{switchable} if $|N_c(v)|$ is larger than a specific threshold, which is defined by the so-called \textit{switching rule} (discussed in detail in the next subsection). If $v$ is switchable, then it can change its color to the opposite color (i.e. it can \textit{switch}). We also use the word \textit{balance} to refer to the metric $\frac{|N_c(v)|}{d_v}$ in general, which indicates how close node $v$ is to being switchable.

A \textit{majority/minority process} is a sequence of colorings of the graph $G$ (known as \textit{states}). Every state is obtained from the previous state by switching a switchable node in the previous state. We assume that exactly one node switches in each step, which is often known as the \textit{sequential} or asynchronous model of the process. In our paper, we also assume that the initial state of the process is a \textit{uniform random coloring}, i.e. each node is white with probability $\frac{1}{2}$ and black with probability $\frac{1}{2}$, independently from other nodes.

We say that a state of the process is \textit{stable} if there are no more switchable nodes in the graph. The number of steps in the process (from the initial state until a stable state is reached) is known as the \textit{stabilization time} of the process.

We study the processes in general graphs, and we are interested in the longest possible stabilization time of a process, i.e. if in each step, the next node to switch among the switchable nodes is selected by an adversary who maximizes stabilization time. In other words, we study the worst-case stabilization of a graph on $n$ nodes under the worst possible ordering of switches.

We also use basic tools from probability theory, such as the union bound and the Chernoff bound, and the concept of an event happening \textit{with high probability} (\textit{w.h.p.}). For completeness, a brief summary of these techniques is provided in Appendix \ref{App:Z}.

\subsection{Switching rules}

Another important parameter of the processes is the condition that allows nodes to switch their color. There are two natural candidates for such a switching rule:

\vspace{2pt}

\begin{rules}
\textbf{$\!\!$\emph{Basic switching:}}   node $v$ is switchable if $|N_c(v)| \, > \, \frac{1}{2} \cdot d_v$.
\end{rules}

\begin{rules}
\textbf{$\!\!$\emph{Proportional switching:}}   node $v$ is switchable if $|N_c(v)| \, \geq \, \frac{1+\lambda}{2} \cdot d_v$.
\end{rules}

\vspace{2pt}

Note that both rules ensure that the overall number of conflicts in the graph strictly decreases in each switching step. Since there are at most $|E|=O(n^2)$ conflicts in the graph initially, we obtain a straightforward upper bound of $O(n^2)$ on the stabilization time.

In case of basic switching, a node switches its color for an arbitrarily small improvement. Alternatively, if we denote the complement of $N_c(v)$ by $N_{\overline{c}}(v):=N(v) \setminus N_c(v)$, we can also formulate this rule as $|N_c(v)| - |N_{\overline{c}}(v)| > 0$. In case of the worst possible initial coloring, this rule is known to allow a stabilization time of $\Theta(n^2)$ \cite{minority, majority, votingtime}.

In contrast to this, proportional switching is defined for a specific parameter $\lambda \in (0,1]$, and it requires that $v$ is in conflict with a specific portion of its neighborhood, with $\frac{1+\lambda}{2} \in (\frac{1}{2}, 1]$. This is often a more realistic approach if nodes have a large degree, or if switching also induces some cost in an application area. Equivalently, we can rephrase this rule as $|N_c(v)| - |N_{\overline{c}}(v)| \geq \lambda \cdot d_v$. This shows that whenever $v$ switches, the total number of conflicts in the graph decreases by at least $\lambda \cdot d_v$, and $v$ can have at most $\frac{1+\lambda}{2} \cdot d_v - \lambda \cdot d_v = \frac{1-\lambda}{2} \cdot d_v$ conflicts on the incident edges after the switch.

In case of a worst-case initial coloring, the maximal stabilization time for proportional switching is between quadratic and linear, following a monotonously decreasing non-elementary function $f(\lambda)$ described in \cite{prop}.
Since this non-elementary function also plays a role in our lower bound, we briefly discuss $f(\lambda)$ in Appendix \ref{App:C} for completeness.

Note that for a very small $\lambda$ value approaching $0$, we can obtain basic switching as a special case of proportional switching in the limit.

\subsection{Application of earlier results} \label{sec:earlier}

We also apply the basic ideas behind some of the constructions from previous work, which were used to show similar lower bounds for a worst-case initial coloring.

\subparagraph*{Construction idea for basic switching.}

Recall that the result of \cite{minority} provides a quadratic lower bound on the stabilization time of minority processes. 

\begin{theorem*}[from \cite{minority}]
Consider minority processes under the basic switching rule. There exists a class of graphs and an initial coloring with a stabilization time of $\Omega(n^2)$.
\end{theorem*}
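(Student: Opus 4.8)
The plan is to exhibit an explicit graph family $\{G_m\}$ with $n=|V(G_m)|=\Theta(m)$, together with an initial coloring and an adversarial ordering of switches that performs $\Theta(m^2)=\Theta(n^2)$ switches before a stable state is reached; combined with the generic observation that each switch removes at least one conflict and there are at most $O(n^2)$ conflicts, this makes the bound tight. A useful lens is that a minority switch of a node $v$ increases the size of the cut between the black and the white nodes by exactly $|N_s(v)|-|N_o(v)|\ge 1$, so a minority process is precisely single-vertex local search for \textsc{MaxCut} on an unweighted graph; the goal is therefore to design a dense instance on which such local search is slow.

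I would assemble $G_m$ from two gadgets. The first is a \emph{transmission chain}: a path-like sequence $w_1,\dots,w_{\Theta(m)}$ along which a single ``defect'' — a local pattern that makes one node switchable — can be driven from one end to the other, costing $\Theta(m)$ switches per traversal. To make the plain rule ``$|N_c(v)|>\frac{1}{2}\cdot d_v$'' behave like a clean one-directional conveyor, each $w_i$ is padded with a calibrated number of degree-$1$ nodes that fix $d_{w_i}$ so the threshold is crossed by exactly the intended neighbour pattern and that never become switchable themselves; the property to engineer is that moving the defect from $w_i$ to $w_{i+1}$ simultaneously drops $w_i$ well below its threshold, so that it cannot re-fire until the return sweep. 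The second gadget is a constant-size core attached to one end, whose firing absorbs an arriving defect and emits a fresh one back down the chain, each time consuming one unit of a \emph{reservoir} of $\Theta(m)$ otherwise-inert nodes. Interleaving $\Theta(m)$ firings of the core with $\Theta(m)$ chain traversals of length $\Theta(m)$ yields $\Theta(m^2)$ switches; once the reservoir is exhausted the core can no longer fire, the chain settles, and the configuration is stable.

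The verification then proceeds in the order: (1) write down every node's degree and check, step by step along the intended schedule, that the node being switched really has $|N_c(v)|>\frac{1}{2}\cdot d_v$ — this is exactly where the padding and the strict inequality must be reconciled; (2) check that no node outside the schedule becomes switchable prematurely, or, failing that, argue that the adversary can still force $\Theta(m^2)$ steps; (3) check termination, so that the count is a genuine stabilization time and not an infinite run; and (4) pick $m$ so that $n=\Theta(m)$. I expect the main obstacle to be steps (1)--(2): minority dynamics are counterintuitive, since a switch makes some neighbours more and others less switchable, so arranging the chain and core so that defects propagate strictly forward on one sweep and strictly backward on the next — without the two sweeps interfering and without the reservoir leaking — is the delicate part, while the rest is degree bookkeeping. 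A construction of this flavour, which this paper later extends to a uniformly random initial coloring, is carried out in \cite{minority}.
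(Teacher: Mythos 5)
Your proposal is a plan rather than a proof --- the gadgets are never constructed and you ultimately defer the delicate steps (1)--(2) to \cite{minority} --- but the more serious problem is that the architecture you describe would fail quantitatively, by the very budget argument you state yourself. Under basic switching, each switch increases the black/white cut (equivalently, strictly decreases the number of conflicts) by at least $1$, so the stabilization time under \emph{any} schedule is at most $|E|$. Hence an $\Omega(n^2)$ lower bound forces $\Omega(n^2)$ edges, and the nodes that are supposed to switch $\Theta(n)$ times must have degree $\Theta(n)$. Your conveyor, as described, is sparse: a path-like chain whose nodes carry a ``calibrated number'' of degree-$1$ pendants, a constant-size core, and a reservoir of $\Theta(m)$ otherwise-inert nodes attached to that core contain only $O(m)$ edges in total, so no adversarial ordering can exceed $O(n)$ switches --- an order of magnitude short of the target, despite your own remark that the instance must be dense. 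To rescue the idea you would need each frequently-switching node to carry $\Theta(m)$ neighbors, at which point the threshold engineering changes completely: those neighbors' colors must be split essentially evenly so that, whichever color the node currently has, it sits just below the strict-majority threshold and a single switch of a designated neighbor tips it over.

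That balancing trick is exactly what the construction behind the cited theorem does, but with a different topology from your defect-propagation chain: a group $P$ of $m$ nodes, each of degree $\Theta(m)$, is joined completely bipartitely to two driver sets $A$ and $B$ of size $m$ whose nodes each want to switch (and switch only once); because $P$'s neighborhood is kept balanced to within a single node, after every individual driver switch --- taken alternately from $A$ and $B$ --- the entire group $P$ becomes switchable again and is re-switched wholesale. The quadratic count thus comes from $m$ high-degree nodes each switching $\Theta(m)$ times, giving $m \cdot 2m = \Theta(n^2)$ switches, not from a defect traversing a long sparse chain $\Theta(m)$ times, which the conflict budget rules out.
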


The main idea of the construction is to have a set $P$ of $m$ nodes, attached to two further sets $A$ and $B$ of size $m$. The construction makes sure that every node in $A$ and $B$ wants to switch to the opposite color. Then we switch these nodes in an alternating fashion: one from $A$, one from $B$, one from $A$ again, and so on. The set $P$ is designed such that its neighborhood is approximately balanced, and thus after each of these steps, the entire set $P$ is switchable. Switching $P$ after each step gives a sequence of $m \cdot 2m = \Theta(n^2)$ switches.

\subparagraph*{Black box construction for proportional switching.}

We also use the result of \cite{prop}, which provides a lower bound construction for any $\lambda \leq \frac{1}{3}$ in case of proportional switching and worst-case initial coloring. We apply this graph as a black box in our constructions, and refer to it as the \textsc{prop} construction. 

\begin{theorem*}[from \cite{prop}]
Consider majority/minority processes under proportional switching for any $\lambda \leq \frac{1}{3}$. There exists a class of graphs and an initial coloring with a stabilization time of $\Omega(n^{1+f(\lambda)-\varepsilon})$ for the function $f$ described in Appendix \ref{App:C} and for any $\varepsilon>0$.
\end{theorem*}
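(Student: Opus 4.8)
The plan is to build a recursive family of gadgets $H_0, H_1, \dots, H_L$, where the top gadget $H_L$ is the graph we want and $L$ is fixed at the end as an $\varepsilon$-dependent constant. The idea mirrors the alternating $A$–$B$–$P$ trick recalled above for basic switching, but with proportional switching a single ``pump'' of a threshold set no longer stays switchable once its neighbourhood barely tips, so each level must be powered by a full run of the level below it rather than by a single pair of switches. Concretely, $H_i$ will contain a small ``engine'' part — a few nodes that get switched back and forth $t_i$ times — together with $b_i$ disjoint copies of $H_{i-1}$ and a layer of connector nodes, wired so that each of the $t_i$ engine steps triggers a complete stabilization run inside every one of the $b_i$ sub-copies. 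This yields a time recursion $T_i \geq t_i \cdot b_i \cdot T_{i-1}$ and a size recursion $n_i \leq b_i \cdot n_{i-1} + c_i$, where $c_i$ counts the auxiliary and connector nodes of level $i$.

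First I would settle the base case $H_0$: a constant-size gadget under proportional switching with a prescribed initial colouring in which $\Omega(1)$ switches occur — essentially a short path or a small dense gadget, and even the trivial ``every node switches once'' already suffices to seed the recursion. The inductive step is the heart of the construction: given $H_{i-1}$, attach it to alternately-switchable sets $A_i, B_i$ as in the basic construction, but now interpose ``threshold-padding'' nodes so that the connector nodes feeding each $H_{i-1}$ copy see \emph{exactly} a $\frac{1+\lambda}{2}$ fraction of conflicting neighbours at the moment we want them to fire, and strictly less afterwards. The degrees and initial colours of the padding nodes must be chosen so that a single engine step flips just enough of a connector's neighbourhood to push it over the $\frac{1+\lambda}{2}$ threshold; keeping this feasible simultaneously at every level is precisely where the hypothesis $\lambda \leq \frac{1}{3}$ is used (it bounds how much padding is needed per unit of amplification, so that the $c_i$ and $b_i$ terms in the size recursion stay small enough not to flatten the exponent).

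Unrolling the two recursions gives $T_L \geq \prod_{i=1}^{L}(t_i b_i)\cdot T_0$ against $n_L \approx \prod_{i=1}^{L} b_i$ up to lower-order terms, so the achieved exponent is $1 + \frac{\log\prod_i t_i}{\log n_L}$. Optimising the per-level choices of $t_i$ and $b_i$ — how hard to pump a level versus how many sub-copies to replicate, subject to the proportional-switching feasibility constraint linking them — is exactly the discrete optimisation whose limiting value as $L \to \infty$ is the non-elementary function $f(\lambda)$ of Appendix~\ref{App:C}; for a finite $L$ one obtains exponent $1 + f(\lambda) - \varepsilon$ with $\varepsilon \to 0$ as $L$ grows, after which fixing $L = L(\varepsilon)$ constant absorbs all the $c_i$ overheads into constants and keeps $n_L = \widetilde{\Theta}(m)$ for the linear size parameter $m$. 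The main obstacle I expect is the bookkeeping in the inductive step: proving that during a \emph{partial} run of a sub-gadget — when many but not all of its internal switches have already happened — the engine and connector nodes of the current level still have exactly the prescribed balance, i.e. that $H_{i-1}$ behaves as a clean black box that neither leaks conflicts upward nor disarms prematurely, and is re-armed by a single engine step. Showing this modularly across all $L$ levels, so that accumulated perturbations never break the threshold invariants, is the delicate part.
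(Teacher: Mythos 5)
First, a point of orientation: the paper does not prove this statement at all --- it is quoted verbatim from \cite{prop} and used strictly as a black box (the \textsc{prop} construction) inside the Simulation Phase; the only related material in the paper is the description of $f(\lambda)$ in Appendix~\ref{App:C}. So your proposal is not being measured against an in-paper argument but against the external construction of \cite{prop}, and it has to stand on its own as a proof sketch. As such, it reads as a plausible plan in spirit (a recursive, level-by-level amplification whose optimized exponent should be $1+f(\lambda)$), but the step that actually determines $f(\lambda)$ is missing, and the recursion as you state it has a structural feasibility problem under proportional switching.

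Concretely, you posit $T_i \geq t_i \cdot b_i \cdot T_{i-1}$ by having each of the $t_i$ engine steps re-trigger a \emph{complete} stabilization run inside every sub-copy $H_{i-1}$. Under proportional switching, a node $v$ that is to switch again must first see its neighborhood shift by at least $\lambda \cdot d_v$ (each switch removes at least $\lambda d_v$ conflicts and leaves at most $\frac{1-\lambda}{2} d_v$), so re-running a whole sub-copy $t_i$ times forces every re-switched internal node to have a large fraction of its degree going to the engine/connector side; but if those external edges dominate, the internal edges fall below a $\lambda$-fraction of each node's degree and the sub-copy's internal dynamics can no longer drive its own run, while ``a few engine nodes'' certainly cannot supply a $\lambda d_v$ shift to high-degree internal nodes. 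Balancing exactly how much of each node's degree may point ``outward'' (to be re-armed) versus ``inward'' (to carry the simulated dynamics) is precisely the quantitative trade-off parametrized by $\varphi \in (0,\frac{1-\lambda}{2}]$ in the formula $f(\lambda)=\max_{\varphi}\log\bigl(\frac{1-\varphi}{\lambda+\varphi}\bigr)/\log\bigl(\frac{1-\varphi}{\varphi}\bigr)$, and your sketch never derives the per-level constraint linking $t_i$, $b_i$, the degrees, and $\lambda$; it simply asserts that the optimization ``is exactly'' the one whose limit is $f(\lambda)$. Without that derivation the claimed exponent is unsupported (and the role you assign to the hypothesis $\lambda \leq \frac{1}{3}$ is likewise a guess rather than an argument), so the proposal as written is an outline of intent rather than a proof of the cited bound.
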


\section{Basic observations}

\subsection{Initially balanced sets}

Since we start from a uniform random initial coloring, a basic tool in our proofs is the fact that w.h.p., a large set of nodes has a balanced distribution of the colors initially.

\begin{definition}[$\epsilon$-balanced set]
Given a specific coloring, we say that a set of nodes $S$ is \emph{$\epsilon$-balanced} if the number of white nodes in $S$ is within $\left[(\frac{1}{2}-\epsilon) \cdot |S|, \, (\frac{1}{2}+\epsilon) \cdot |S| \right]$.
\end{definition}

\begin{lemma} \label{lem:epsbalanced_initially}
Let $S_1, ..., S_k$ be subsets of nodes in $G$ such that $|S_i| \geq c_0 \cdot \log{n}$ for some constant $c_0$ for all $i \in \{1, ..., k \}$, and $k \leq n$. Then for any constant $\epsilon>0$, there is a $c_0$ such that w.h.p., each set $S_i$ is initially $\epsilon$-balanced.
\end{lemma}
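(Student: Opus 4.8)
The plan is to apply a Chernoff bound to each set $S_i$ individually, and then take a union bound over all $k \leq n$ sets. Fix a set $S_i$ with $|S_i| \geq c_0 \log n$. Since each node of $S_i$ is independently white with probability $\frac12$, the number $W_i$ of white nodes in $S_i$ is a sum of independent indicator variables with $\mathbb{E}[W_i] = \frac12 |S_i|$. The event that $S_i$ fails to be $\epsilon$-balanced is exactly the event that $|W_i - \frac12|S_i|| > \epsilon |S_i|$, i.e. a relative deviation of $2\epsilon$ from the mean. By the multiplicative Chernoff bound, $\Pr[\,|W_i - \frac12|S_i|| > 2\epsilon \cdot \frac12|S_i|\,] \leq 2\exp(-c_\epsilon \cdot \frac12 |S_i|)$ for a constant $c_\epsilon>0$ depending only on $\epsilon$ (e.g. one may take $c_\epsilon = \frac{(2\epsilon)^2}{3}$ for $2\epsilon \leq 1$, and any fixed positive constant otherwise).

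Next I would plug in the size bound $|S_i| \geq c_0 \log n$ to get that each set fails with probability at most $2\exp(-\frac{c_\epsilon c_0}{2}\log n) = 2 n^{-c_\epsilon c_0/2}$. Taking a union bound over the $k \leq n$ sets, the probability that some $S_i$ is not $\epsilon$-balanced is at most $2 n^{1 - c_\epsilon c_0 / 2}$. Now I choose $c_0$ large enough (as a function of $\epsilon$ only) so that $c_\epsilon c_0 / 2 \geq 2$, say; then this bound is at most $2n^{-1} = o(1)$, and in fact it is polynomially small, which is exactly the ``with high probability'' guarantee. This $c_0$ depends only on $\epsilon$ (through $c_\epsilon$), as claimed in the statement.

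There is no real obstacle here — the lemma is a routine concentration-plus-union-bound argument, which is why the paper flags it as a ``basic tool.'' The only mild subtlety worth stating carefully is the bookkeeping of constants: one must make sure the constant $c_0$ is selected \emph{after} $\epsilon$ (so that it can absorb the factor coming from $c_\epsilon$ and still beat the extra factor of $n$ from the union bound), and one should phrase the Chernoff application in terms of the relative deviation $2\epsilon$ rather than $\epsilon$, since $\epsilon$ measures deviation as a fraction of $|S_i|$ whereas the standard multiplicative Chernoff bound measures it as a fraction of the mean $\frac12|S_i|$. With that ordering of quantifiers fixed, the whole argument is two lines of Chernoff followed by one line of union bound.
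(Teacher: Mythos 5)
Your proposal is correct and follows essentially the same route as the paper: a multiplicative Chernoff bound on each $S_i$ (with the relative deviation $2\epsilon$ handled exactly as in the paper, which picks $c_0 = 3/\epsilon^2$ to make each failure probability $2n^{-2}$), followed by a union bound over the $k \leq n$ sets. No gaps.
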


\begin{proof}
Let us select $c_0=\frac{3}{\epsilon^2}$. According to the Chernoff bound, the probability that $S_i$ is not $\epsilon$-balanced is at most
\[ 2 \cdot e^{-4 \epsilon^2 \cdot \frac{1}{6} \cdot |S_i|} \, \leq \, 2 \cdot e^{-\frac{2}{3} \epsilon^2 \cdot c_0 \cdot \log{n}} \, = \,  2 \cdot n^{-2}. \]
If we take a union bound over all the $k \leq n$ subsets, the probability that any of them is not $\epsilon$-balanced is at most $n \cdot 2 \cdot n^{-2} = 2 \cdot n^{-1}$, so w.h.p. the claim indeed holds. 
\end{proof}

In particular, we can select a high constant $c_0$, and refer to nodes $v$ with $d_v \geq c_0 \cdot \log{n}$ as \textit{high-degree} nodes, and the remaining nodes as \textit{low-degree} nodes. Then Lemma \ref{lem:epsbalanced_initially} can be rephrased into the following claim:

\begin{corollary} \label{lem:epsbalanced_neighborhoods}
For any $\epsilon>0$, there exists a $c_0$ such that w.h.p. the following claim holds: for all the high-degree nodes $v$ in $G$, $N(v)$ is initially $\epsilon$-balanced.
\end{corollary}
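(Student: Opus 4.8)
The plan is to obtain this as a direct instantiation of Lemma~\ref{lem:epsbalanced_initially}. Fix $\epsilon>0$ and let $c_0$ be the constant supplied by that lemma (e.g.\ $c_0=3/\epsilon^2$). The key bookkeeping point is to use the \emph{same} $c_0$ to define the high-degree nodes, i.e.\ the nodes $v$ with $d_v\geq c_0\cdot\log n$: with this choice, the neighborhood $N(v)$ of every high-degree node automatically has size at least $c_0\cdot\log n$, which is exactly the lower bound the lemma requires of its input sets.

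Next I would assemble the relevant family of sets. Let $v_1,\dots,v_k$ be the high-degree nodes of $G$, and put $S_i:=N(v_i)$. Since each node occurs at most once in this list, $k\leq n$, and by the previous paragraph $|S_i|=d_{v_i}\geq c_0\cdot\log n$ for all $i$. Hence $S_1,\dots,S_k$ satisfies precisely the hypotheses of Lemma~\ref{lem:epsbalanced_initially}, so applying that lemma yields that w.h.p.\ every $S_i=N(v_i)$ is $\epsilon$-balanced in the initial coloring, which is exactly the asserted claim.

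The only things worth double-checking are that the probabilistic setting matches and that overlaps between the $N(v_i)$ cause no trouble. The initial coloring assigns each node an independent uniform color, and since $G$ is simple we have $v\notin N(v)$, so the colors within each $N(v_i)$ are genuinely independent and the Chernoff estimate inside the proof of Lemma~\ref{lem:epsbalanced_initially} applies verbatim; possible overlaps among different neighborhoods are harmless because they are absorbed by the union bound over the $k\leq n$ sets. I do not expect any real obstacle here — the corollary is just the lemma repackaged for the specific sets (neighborhoods of high-degree nodes) that the later constructions will actually reason about.
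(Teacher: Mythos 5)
Your proposal is correct and follows exactly the paper's route: the corollary is obtained by instantiating Lemma~\ref{lem:epsbalanced_initially} with the sets $S_i = N(v_i)$ for the at most $n$ high-degree nodes, whose degrees guarantee $|S_i| \geq c_0 \cdot \log n$ by definition. The additional remarks about independence of colors within each neighborhood and overlaps being handled by the union bound are accurate and consistent with the lemma's proof.
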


\subsection{Linear lower bound} \label{sec:triv_lower}

Note that we can easily provide an example of linear stabilization time, even for proportional switching with any $\lambda \in (0,1)$.

Consider an edge graph, i.e.\ a connected component with only two adjacent nodes $u$ and $v$. With a probability of $\frac{1}{2}$, node $v$ is initially switchable in this graph, for both majority/minority processes (since it has the opposite/same color as $u$, respectively). Let us take $\frac{n}{2}$ independent copies of this single-edge graph; this gives $\frac{n}{2}$ nodes in the role of $v$. Then $\frac{n}{4}$ of these nodes are switchable in expectation, and with a Chernoff bound, one can show that at least $\frac{n}{8}$ are switchable w.h.p.. We can switch these $\frac{n}{8}$ nodes in any order to obtain a sequence of $\frac{n}{8} \in \Omega(n)$ switches.

\section{Lower bound constructions for basic switching} \label{sec:basic}

For basic switching, we can give an example of quadratic stabilization time by a suitable extension of the construction in \cite{minority} to the random-initialized setting.

In our analysis, we refer to a set of nodes as a \textit{group} if they all have exactly the same neighborhood. In our figures, we denote groups by double-sided circles, with the cardinality shown beside the group, and an edge between two groups denotes a complete bipartite connection between the corresponding sets. Note that the nodes of a group always prefer the same color.

\begin{theorem}
Consider majority/minority processes under the basic switching rule, starting from a uniform random initial coloring. There exists a class of graphs that exhibit a stabilization time of $\Omega(n^2)$ with high probability in this model.
\end{theorem}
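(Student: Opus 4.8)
The plan is to reuse the quadratic construction of \cite{minority} recalled in Section~\ref{sec:basic}, and make it robust to a random start. Recall its skeleton: a set $P$ of $\Theta(m)$ nodes whose neighbourhood is kept (approximately) balanced, two ``driver'' sets $A,B$ of $\Theta(m)$ nodes each, and possibly a bounded number of calibration nodes. From a carefully chosen coloring one fires the drivers one at a time in alternating fashion; each firing makes all of $P$ switchable, and sweeping $P$ after every firing yields $\Theta(m)$ sweeps of length $\Theta(m)$, hence $\Theta(m^2)$ switches. In our setting we control none of the colors, so three things can go wrong: $P$ need not start monochromatic, $A$ and $B$ need not start in the configuration that keeps their nodes switchable, and the calibration nodes cannot be pinned to fixed colors in advance.

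The first two issues are handled by Corollary~\ref{lem:epsbalanced_neighborhoods} and the tools of Section~\ref{sec:triv_lower}. I would make $P$ --- and every other set whose balance is used in the analysis --- high-degree, so that by Corollary~\ref{lem:epsbalanced_neighborhoods}, w.h.p.\ the initial neighbourhood of every such node is $\epsilon$-balanced. I would likewise keep $|A|,|B| = \Theta(m)$, so that $A$ and $B$ are themselves $\epsilon$-balanced initially by Lemma~\ref{lem:epsbalanced_initially}; this leaves $\Theta(m)$ drivers of each color to fire, which is all we need since the schedule uses only $\Theta(m)$ firings in total. The calibration nodes, and any other node whose exact initial color matters, are replaced by small private gadgets that reach the desired stable configuration with at least constant probability, independently per copy. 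Replicating each such gadget $\Theta(m)$ times and taking a Chernoff bound, w.h.p.\ a constant fraction of the copies are usable; discarding the others costs only a constant factor in $n$. This is precisely the replication-and-Chernoff idea behind the linear lower bound of Section~\ref{sec:triv_lower}.

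With these ingredients I would first run an $O(m)$-switch preprocessing phase that steers the randomly coloured (but, w.h.p., well-behaved) driver and calibration sets into the canonical configuration of \cite{minority} without touching $P$; the deterministic argument of \cite{minority} then carries over and produces $\Theta(m)$ further firings and $\Theta(m^2)$ switches. Choosing $m = \Theta(n)$ yields a stabilization time of $\Omega(n^2)$ w.h.p., matching the $O(n^2)$ upper bound from the monotone decrease of conflicts; and since the whole construction is phrased via conflicts, only cosmetic changes are needed to cover both majority and minority. The step I expect to be the main obstacle is reconciling the \emph{approximate} guarantee that randomness supplies --- neighbourhoods balanced only up to a $\pm\epsilon\,d_v$ slack, rather than the exact counts that \cite{minority} tunes by hand --- with the delicate bookkeeping that makes a \emph{single} driver firing re-enable all of $P$. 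One has to redesign the calibration so that, for every non-discarded outcome of the random colors, the balance of $N(P)$ can be brought exactly to its tipping point during preprocessing and kept oscillating across the sweeps, without $P$ ever becoming switchable at the wrong moment.
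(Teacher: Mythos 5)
Your high-level plan (reuse the construction of \cite{minority}, make $P$ high-degree so its neighborhood is $\epsilon$-balanced by Corollary~\ref{lem:epsbalanced_neighborhoods}, and accept that only a constant fraction of the drivers in $A,B$ start with the useful color) is the same starting point as the paper's, but two essential steps are missing. First, you explicitly leave open the step you call the main obstacle --- bringing $N(P)$ exactly to its tipping point and keeping it oscillating --- and this is precisely the part that needs an argument. The paper closes it cheaply: $P$ is given odd degree (one extra pendant neighbor), its neighborhood $A\cup B$ is $\epsilon$-balanced w.h.p., and w.h.p.\ at least $m/3$ nodes of $A$ and of $B$ are switchable \emph{independently of $P$'s colors} (because each is dominated by the monochromatic anchor group $A_0$ or $B_0$ of size $m+1>|P|$); so one can spend at most $\epsilon m \ll m/3$ of these driver switches to make $P$ have exactly one surplus neighbor of one color, after which each single alternating driver switch flips the surplus and re-enables all of $P$. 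Without the anchor groups guaranteeing that drivers stay switchable regardless of $P$, and without the odd-degree/exact-surplus bookkeeping, the oscillation you need is not established, so your proposal stops exactly where the proof has to begin.

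Second, your claim that majority and minority differ only cosmetically is wrong, and your replicate-and-discard gadget idea does not fix it. The whole schedule hinges on driving the two specific anchor groups $A_0$ and $B_0$ to \emph{opposite} colors from a random start. For minority this is easy (an odd-degree group has a forced preferred color, and $B_0$ then prefers the color opposite to an all-white $A_0$), but for majority the anchors want to agree with their neighbors, so no direct connection can force them apart; you also cannot ``discard failed copies'' of $A_0$, since it is the one group wired to all of $A$, and it must succeed w.h.p., not with constant probability. The paper handles this with a different mechanism (Section~\ref{sec:simple_maj} and Appendix~\ref{App:A}): $\Theta(\log m)$ independent trigger groups $A_{1,i}$--$A_{2,i}$ of size $\Theta(m/\log m)$, so that w.h.p.\ some $A_{2,\hat i}$ can be turned white, which then tips $A_{1,\hat i}$ and finally $A_0$; this step even needs a sharper Chernoff bound (deviation $O(m^{3/5})$ against a boost of $\Theta(m/\log m)$), not just the constant-$\epsilon$ balance of Lemma~\ref{lem:epsbalanced_initially}. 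Your proposal contains neither this symmetry-breaking structure nor the quantitative comparison that makes it work, so as written it does not yield the majority half of the theorem.
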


\noindent We now outline the main ideas of these graphs, with the details discussed in Appendix \ref{App:A}.

\subsection{Minority processes} \label{sec:simple_min}

For minority processes, consider the graph in Figure \ref{fig:basic}, which is essentially an extension of the graph in \cite{minority} with a complete bipartite connection between $A_0$ and $B_0$. For simplicity, we add an extra node to ensure that $P$ has an odd degree. The graph has $5m+3$ nodes, and thus $m \in \Theta(n)$.

Regardless of the initial coloring, each node in $A_0$ has the same preferred color, since they all have exactly the same neighbors and they have an odd degree. Thus we can switch each node in $A_0$ to this preferred color (if it did not have this color already). Assume w.l.o.g. that this color is white. Since now $A_0$ is white entirely, we can switch each node in $B_0$ to black. With this, the preferred color of each node in $A$ becomes black, and the preferred color of each node in $B$ becomes white.

An intuitive description of the remaining sequence is as follows. Both $A$ and $B$ have approximately $\frac{m}{2}$ nodes (and w.h.p. at least $\frac{m}{3}$ nodes) that have the same color as the group above. These nodes are now all switchable, regardless of the color of nodes in $P$. We disregard the remaining nodes, and only focus on these $\frac{m}{3}$ switchable nodes in $A$ and $B$.

Initially, the neighborhood of $P$ is w.h.p. $\epsilon$-balanced. Hence by switching only $\epsilon \cdot m$ of nodes either in $A$ or in $B$, we can ensure that $P$ has exactly one more white neighbor than black, which allows us to switch the entire group $P$ to black. Then by switching one node in $A$ to black, $P$ will have one more black neighbor than white, so $P$ becomes switchable again. We can then switch the nodes in $A$ and $B$ in an alternating fashion; this ensures that $P$ always has one more same-colored neighbor after each step, which makes $P$ switchable again. This process allows us to switch the nodes of $P$ altogether $\Theta(m)$ times, which already adds up to a sequence of $\Theta(m^2)=\Theta(n^2)$ switches.

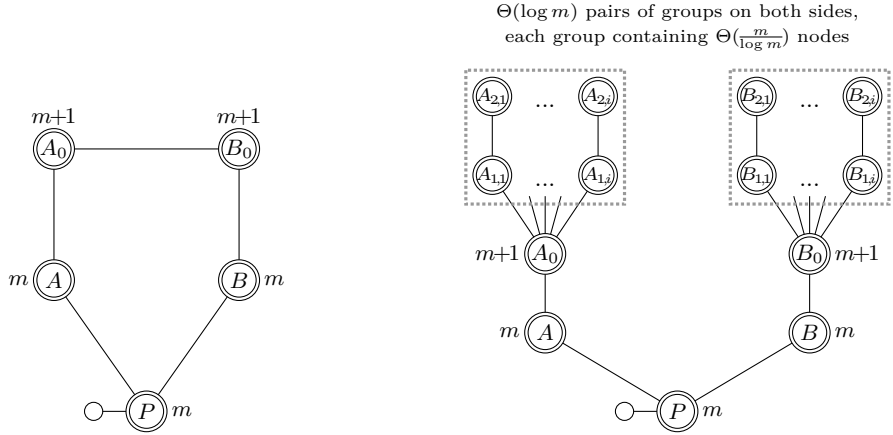
\begin{figure}
\centering
\hspace{0.04\textwidth}
\begin{subfigure}[b]{0.4\textwidth}
	\scalebox{.99}{\begin{tikzpicture}
	
	\draw (35pt,0pt) -- (0pt,50pt);
	\draw (35pt,0pt) -- (70pt,50pt);
	\draw (0pt,50pt) -- (0pt,100pt);
	\draw (70pt,50pt) -- (70pt,100pt);
	\draw (0pt,100pt) -- (70pt,100pt);
	
	\draw (15pt,0pt) -- (35pt,0pt);
	\draw[black, fill=white] (15pt,0pt) circle (0.8ex);
	
	\draw[black, fill=white] (35pt,0pt) circle (1.8ex);
	\draw[black, fill=white] (35pt,0pt) circle (1.5ex);
	
	\draw[black, fill=white] (0pt,50pt) circle (1.8ex);
	\draw[black, fill=white] (0pt,50pt) circle (1.5ex);
	
	\draw[black, fill=white] (70pt,50pt) circle (1.8ex);
	\draw[black, fill=white] (70pt,50pt) circle (1.5ex);
	
	\draw[black, fill=white] (0pt,100pt) circle (1.8ex);
	\draw[black, fill=white] (0pt,100pt) circle (1.5ex);
	
	\draw[black, fill=white] (70pt,100pt) circle (1.8ex);
	\draw[black, fill=white] (70pt,100pt) circle (1.5ex);
	
	\node[anchor=south] at (0pt,105.5pt) {\footnotesize $m\!\!+\!\!1$};
	\node[anchor=south] at (70pt,105.5pt) {\footnotesize $m\!\!+\!\!1$};
	\node[anchor=east] at (-6pt,50pt) {\footnotesize $m$};
	\node[anchor=west] at (76pt,50pt) {\footnotesize $m$};
	\node[anchor=west] at (41pt,0pt) {\footnotesize $m$};
	
	\node[anchor=center] at (35pt,0pt) {\footnotesize $P$};
	\node[anchor=center] at (0pt,50pt) {\footnotesize $A$};
	\node[anchor=center] at (70pt,50pt) {\footnotesize $B$};
	\node[anchor=center] at (0pt,100pt) {\footnotesize $A_0$};
	\node[anchor=center] at (70pt,100pt) {\footnotesize $B_0$};
	
\end{tikzpicture}}
\end{subfigure}
\hspace{0.02\textwidth}
\begin{subfigure}[b]{0.5\textwidth} 
	\scalebox{.99}{\begin{tikzpicture}
	
	\draw (50pt,0pt) -- (0pt,30pt);
	\draw (50pt,0pt) -- (100pt,30pt);
	\draw (0pt,30pt) -- (0pt,60pt);
	\draw (100pt,30pt) -- (100pt,60pt);
	
	\draw (0pt,60pt) -- (20pt,90pt);
	\draw (0pt,60pt) -- (-6pt,82pt);
	\draw (0pt,60pt) -- (0pt,82pt);
	\draw (0pt,60pt) -- (6pt,82pt);
	\draw (0pt,60pt) -- (-20pt,90pt);
	
	\draw (20pt,90pt) -- (20pt,120pt);
	\draw (-20pt,90pt) -- (-20pt,120pt);
	
	\draw (100pt,60pt) -- (120pt,90pt);
	\draw (100pt,60pt) -- (94pt,82pt);
	\draw (100pt,60pt) -- (100pt,82pt);
	\draw (100pt,60pt) -- (106pt,82pt);
	\draw (100pt,60pt) -- (80pt,90pt);
	
	\draw (120pt,90pt) -- (120pt,120pt);
	\draw (80pt,90pt) -- (80pt,120pt);
	
	\draw (30pt,0pt) -- (50pt,0pt);
	\draw[black, fill=white] (30pt,0pt) circle (0.8ex);
	
	\draw[black, fill=white] (50pt,0pt) circle (1.8ex);
	\draw[black, fill=white] (50pt,0pt) circle (1.5ex);
	
	\draw[black, fill=white] (0pt,30pt) circle (1.8ex);
	\draw[black, fill=white] (0pt,30pt) circle (1.5ex);
	
	\draw[black, fill=white] (100pt,30pt) circle (1.8ex);
	\draw[black, fill=white] (100pt,30pt) circle (1.5ex);
	
	\draw[black, fill=white] (0pt,60pt) circle (1.8ex);
	\draw[black, fill=white] (0pt,60pt) circle (1.5ex);
	
	\draw[black, fill=white] (100pt,60pt) circle (1.8ex);
	\draw[black, fill=white] (100pt,60pt) circle (1.5ex);
	
	\draw[black, fill=white] (-20pt,90pt) circle (1.7ex);
	\draw[black, fill=white] (-20pt,90pt) circle (1.4ex);
	
	\node[anchor=center] at (0pt,86.5pt) {\small ...};
	
	\draw[black, fill=white] (20pt,90pt) circle (1.7ex);
	\draw[black, fill=white] (20pt,90pt) circle (1.4ex);
	
	\draw[black, fill=white] (-20pt,120pt) circle (1.7ex);
	\draw[black, fill=white] (-20pt,120pt) circle (1.4ex);
	
	\node[anchor=center] at (0pt,116.5pt) {\small ...};
	
	\draw[black, fill=white] (20pt,120pt) circle (1.7ex);
	\draw[black, fill=white] (20pt,120pt) circle (1.4ex);
	
	
	\draw[black, fill=white] (80pt,90pt) circle (1.7ex);
	\draw[black, fill=white] (80pt,90pt) circle (1.4ex);
	
	\node[anchor=center] at (100pt,86.5pt) {\small ...};
	
	\draw[black, fill=white] (120pt,90pt) circle (1.7ex);
	\draw[black, fill=white] (120pt,90pt) circle (1.4ex);
	
	\draw[black, fill=white] (80pt,120pt) circle (1.7ex);
	\draw[black, fill=white] (80pt,120pt) circle (1.4ex);
	
	\node[anchor=center] at (100pt,116.5pt) {\small ...};
	
	\draw[black, fill=white] (120pt,120pt) circle (1.7ex);
	\draw[black, fill=white] (120pt,120pt) circle (1.4ex);
	
	\node[anchor=east] at (-6pt,60pt) {\footnotesize $m\!\!+\!\!1$};
	\node[anchor=west] at (106pt,60pt) {\footnotesize $m\!\!+\!\!1$};
	\node[anchor=east] at (-6pt,30pt) {\footnotesize $m$};
	\node[anchor=west] at (106pt,30pt) {\footnotesize $m$};
	\node[anchor=west] at (56pt,0pt) {\footnotesize $m$};
	
	
	
	\node[anchor=center] at (50pt,0pt) {\footnotesize $P$};
	\node[anchor=center] at (0pt,30pt) {\footnotesize $A$};
	\node[anchor=center] at (100pt,30pt) {\footnotesize $B$};
	\node[anchor=center] at (0pt,60pt) {\footnotesize $A_0$};
	\node[anchor=center] at (100pt,60pt) {\footnotesize $B_0$};
	
	\node[anchor=center] at (-20pt,89.5pt) {\scriptsize $A_{_{\!}1\!,\!1}$};
	\node[anchor=center] at (20pt,89.5pt) {\scriptsize $A_{_{\!}1\!,\!i}$};
	
	\node[anchor=center] at (-20pt,119.5pt) {\scriptsize $A_{_{\!}2\!,\!1}$};
	\node[anchor=center] at (20pt,119.5pt) {\scriptsize $A_{_{\!}2\!,\!i}$};
	
	\node[anchor=center] at (80pt,89.5pt) {\scriptsize $B_{_{\!}1\!,\!1}$};
	\node[anchor=center] at (120pt,89.5pt) {\scriptsize $B_{_{\!}1\!,\!i}$};
	
	\node[anchor=center] at (80pt,119.5pt) {\scriptsize $B_{_{\!}2\!,\!1}$};
	\node[anchor=center] at (120pt,119.5pt) {\scriptsize $B_{_{\!}2\!,\!i}$};
	
	
	\draw[very thick, lgray, densely dotted] (-30pt,130pt) -- (-30pt,79pt) -- (30pt,79pt) -- (30pt,130pt) -- cycle;
	
	\draw[very thick, lgray, densely dotted] (70pt,130pt) -- (70pt,79pt) -- (130pt,79pt) -- (130pt,130pt) -- cycle;
	
	
	\node[anchor=south] at (50pt,145pt) {\scriptsize $\Theta(\log{m})$ pairs of groups on both sides,};
	\node[anchor=south] at (50pt,134pt) {\scriptsize each group containing $\Theta(\!\frac{m}{\log{m}}\!)$ nodes};
	
\end{tikzpicture}}
\end{subfigure}
\caption{Lower bound constructions of $\Omega(n^2)$ steps in case of basic switching, for minority processes (left) and majority processes (right). Recall that double-sided circles denote groups, and edges between groups denote a complete bipartite connection between the two groups.}
\label{fig:basic}
\end{figure}

\subsection{Majority processes} \label{sec:simple_maj}

The case of majority processes is more involved, since in this case, it is more difficult to ensure that the groups $A_0$ and $B_0$ attain different colors.

Instead of connecting $A_0$ to $B_0$, we connect $A_0$ to $\Theta(\log{m})$ further groups of size $\Theta(\frac{m}{\log{m}})$, denoted by $A_{1, 1}$, $A_{1, 2}$, $...$ . Finally, we add $\Theta(\log{m})$ more distinct groups $A_{2, 1}$, $A_{2, 2}$, $...$, also on $\Theta(\frac{m}{\log{m}})$ nodes each, and we create a complete bipartite connection between $A_{1, i}$ and $A_{2,i}$. We attach the same structures to group $B_0$ in a symmetric manner; see Figure \ref{fig:basic} for an overview of the construction.

The main idea of the construction is as follows. With probability $\frac{1}{2}$, the group $A_{1,i}$ has more white nodes than black initially, which allows us to switch $A_{2, i}$ entirely to white. Since the groups $A_{1,i}$ are independent, there is indeed w.h.p. an index $\hat{i}$ such that the group $A_{2,\hat{i}}$ can be switched entirely to white. The neighbors of $A_{1,\hat{i}}$ are initially approximately balanced, so after recoloring all the $\Theta(\frac{m}{\log{m}})$ nodes in $A_{2, \hat{i}}$ to white, $A_{1,\hat{i}}$ has more white neighbors than black; this allows us to switch all of $A_{1,\hat{i}}$ to white. We note while our previous steps all follow directly from Corollary \ref{lem:epsbalanced_neighborhoods}, this specific step requires a slightly stronger version of the Chernoff bound.

We can then apply a similar reasoning on the group $A_0$: since it was w.h.p. balanced initially, and turning $A_{1,\hat{i}}$ to white has increased the number of its white neighbors by $\Theta(\frac{m}{\log{m}})$ w.h.p., we can also turn the entire group $A_0$ white. In a similar fashion, we can use groups $B_{2,\hat{i}}$ and $B_{1,\hat{i}}$ to switch each node in $B_0$ black w.h.p..

Once $A_0$ is white and $B_0$ is black, we again have $\Theta(m)$ switchable nodes in both $A$ and $B$, and thus we can apply the same alternating method as in the minority case.

\section{Proportional switching: lower bound for $\lambda < \frac{1}{3}$} \label{sec:lower}

We now show that for proportional switching with small $\lambda$ values, stabilization time can indeed be superlinear. Note that $\lambda < \frac{1}{3}$ implies that $\frac{1+\lambda}{2}=\frac{2}{3}-\delta$ for some $\delta>0$.

We present our lower bound construction for majority processes; however, since our graph is bipartite, we can easily adapt this result to minority processes by inverting the colors in one of the color classes. More details of this technique are available in Appendix \ref{App:B}.

\begin{theorem} \label{th:lower}
Consider majority/minority processes under the proportional switching rule for any $\lambda < \frac{1}{3}$, starting from a uniform random initial coloring. For any $\varepsilon>0$, there exists a class of graphs that exhibit a stabilization time of $\Omega\left(n^{1+f\left(\frac{2 \cdot \lambda}{1-\lambda}\right) -\varepsilon}\right)$ with high probability.
\end{theorem}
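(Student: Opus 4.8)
\medskip
\noindent\emph{Proof plan.}\quad
The plan is to reduce to the \textsc{prop} black box by attaching gadgets that (a) convert the random initial coloring into the worst‑case coloring \textsc{prop} needs and (b) shift each node's effective switching fraction. Fix a small constant $\epsilon>0$ and let $\lambda' = \frac{2\lambda}{1-\lambda} + O(\epsilon)$ be a slightly inflated copy of $\frac{2\lambda}{1-\lambda}$; let $H$ be the \textsc{prop} construction for parameter $\lambda'$, on $n_0$ nodes with its worst‑case coloring $\chi_0$, which has a stabilization time of $\Omega(n_0^{\,1+f(\lambda')-\varepsilon})$. I will build $G$ on $n=\widetilde\Theta(n_0)$ nodes so that, from a uniform random coloring and w.h.p., some switching sequence first steers the embedded copy of $H$ into $\chi_0$ (in $O(n)$ switches) and then replays the worst‑case $\lambda'$‑run of $H$ step by step.

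The gadget: for each group $U$ of $H$ (there are only $\mathrm{polylog}(n_0)$ of them, as in \cite{prop}) with in‑$H$ degree $d_U$, attach a \emph{follower block} $\Phi_U$ of $(\frac{\lambda}{1-\lambda}+O(\epsilon))d_U$ nodes, each joined to all of $U$, together with a logarithmic initialization cascade (a $\Theta(\log n_0)$‑array of constant‑size seed sets, one of which is w.h.p.\ monochromatic in color $\chi_0(U)$, feeding through intermediate sets into $\Phi_U$; only $\Phi_U$ touches $U$, so the cascade does not change $d_U$). During the replay the adversary keeps $\Phi_U$ equal to the current (monochromatic) color of $U$, flipping $\Phi_U$ right after $U$ flips; thus $\Phi_U$ contributes to $U$'s degree but never to its conflicts, and $U$'s switching threshold against its in‑$H$ conflicts is $\frac{1+\lambda}{2}(d_U+|\Phi_U|)$. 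The block size is calibrated so that this equals $\frac{1+\lambda'}{2}d_U$, i.e.\ the threshold $U$ faces inside $H$; up to the $O(\epsilon)$ terms the identity is $\frac{1+\lambda}{2}\bigl(1+\frac{\lambda}{1-\lambda}\bigr)d_U=\frac{1+\lambda}{2(1-\lambda)}d_U=\frac{1+\lambda'}{2}d_U$ with $\lambda'=\frac{2\lambda}{1-\lambda}$, which is exactly where the map $\lambda\mapsto\frac{2\lambda}{1-\lambda}$ comes from. The same block size is what lets an all‑$\chi_0(U)$ block $\Phi_U$ pull $U$ to $\chi_0(U)$ during setup even against $U$'s initially $\epsilon$‑balanced in‑$H$ neighborhood (Corollary~\ref{lem:epsbalanced_neighborhoods}); the $O(\epsilon)$ slack absorbs the gap between $\epsilon$‑balanced and perfectly balanced, and one checks, using $\epsilon<\lambda/2$, that all cascade and block nodes are non‑switchable whenever they are not being deliberately advanced by the adversary.

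The argument then splits into: (i) \emph{Setup} --- using Lemma~\ref{lem:epsbalanced_initially} and a cascade in the spirit of Section~\ref{sec:simple_maj}, drive every $\Phi_U$ to $\chi_0(U)$ and then every $U$ to $\chi_0(U)$; a union bound over the $\mathrm{polylog}(n_0)$ groups controls the failure probability, at a cost of $O(n)$ switches; (ii) \emph{Replay} --- run the worst‑case $\lambda'$‑sequence of $H$: by the threshold identity every move of $H$ is enabled in $G$ (in either direction, since $\Phi_U$ tracks $U$), while any extra switchability created in the gadgets is simply ignored by the (worst‑case) adversary; this produces $\Omega(n_0^{\,1+f(\lambda')-\varepsilon})$ switches; (iii) \emph{Accounting} --- since $H$ has $\mathrm{polylog}(n_0)$ groups with $\sum_U d_U=\widetilde O(n_0)$, the added nodes total $\widetilde O(n_0)$, so $n=\widetilde\Theta(n_0)$ and the bound becomes $\Omega(n^{1+f(\lambda')-\varepsilon'})$. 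Finally, $f$ is continuous and decreasing, so letting $\epsilon\to0$ sends $f(\lambda')\to f\!\bigl(\tfrac{2\lambda}{1-\lambda}\bigr)$, and the $O(\epsilon)$ inflation of the parameter, the $\varepsilon$ of the black box, and the polylogarithmic node blow‑up all collapse into the single $\varepsilon$ of the statement. The minority case follows by the color‑inverting trick on our bipartite graph (Appendix~\ref{App:B}).

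The crux is the gadget design: the follower block must simultaneously be drivable to a monochromatic state w.h.p.\ from a random start by a bounded cascade, be large enough to override the in‑$H$ neighborhood of $U$ during setup, and be precisely the right size to reproduce the $\lambda'$‑threshold during the replay. The tension between the last two requirements is exactly what pins the achievable parameter down to (essentially) $\frac{2\lambda}{1-\lambda}$ and forces the $\epsilon$‑slack. The remaining work is bookkeeping: checking that every cascade and block node stays non‑switchable except when advanced, that the cascade bottoms out in $O(1)$ levels with only $\widetilde O(n_0)$ auxiliary nodes, and that the replay never triggers a spurious early stabilization.
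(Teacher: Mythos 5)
Your simulation mechanism is genuinely different from the paper's, and its arithmetic core is sound: attaching to each node of a group $U$ a same-colored set of size $\approx\frac{\lambda}{1-\lambda}\cdot d_U$ raises its threshold, measured against in-$H$ conflicts, to $\frac{1+\lambda}{2(1-\lambda)}d_U=\frac{1+\lambda'}{2}d_U$ with $\lambda'=\frac{2\lambda}{1-\lambda}$, which is exactly the paper's parameter map. (The paper instead gives each \textsc{prop} node $\frac{1}{2}\cdot\frac{1+\lambda}{1-\lambda}d_v'$ neighbors in each of two global control sets $S_3^b,S_3^w$, switches both to black for initialization, then flips $S_3^w$ so the attachment stays perfectly balanced forever; no tracking is needed.) However, your per-group follower design rests on internal structure of the \textsc{prop} black box that the quoted theorem does not provide and you never verify: that \textsc{prop} decomposes into $\mathrm{polylog}(n_0)$ groups with $\sum_U d_U=\widetilde O(n_0)$, and that its worst-case schedule flips each group as a monochromatic block, so that ``the current (monochromatic) color of $U$'' is well defined and $\Phi_U$ carries zero conflicts at the moment each in-$H$ switch is replayed. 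If the group count or $\sum_U d_U$ is larger, your auxiliary node count exceeds $\widetilde\Theta(n_0)$ and the exponent degrades. The paper's Simulation Phase is built precisely to avoid such assumptions: the attachment multiplicity scales with each node's own degree, so only extra \emph{edges} (not nodes) scale with $\sum_v d_v'$, and the control sets have $\Theta(m)$ nodes with output degree $\Theta(m)$.

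The more serious gap is the initialization cascade, which you relegate to bookkeeping although it is the bulk of the actual proof. Under proportional switching, a follower node $\phi\in\Phi_U$ has $|U|$ edges into a still randomly colored $U$, so to tip it to $\chi_0(U)$ it needs on the order of $\frac{\lambda}{1-\lambda}|U|$ cascade neighbors that are \emph{already} monochromatic; those nodes in turn need monochromatic inputs amounting to a constant fraction of their own degree, so the reachable degree can only grow by a constant factor per level. Consequently the cascade cannot ``bottom out in $O(1)$ levels'' from constant-size seeds: one needs $\Theta(\log n)$ levels of geometrically growing degree (the paper's Growing Phase), beneath that a level of logarithmic degree where Chernoff concentration first becomes available (Collection Phase), and beneath that a gadget that boosts the per-seed success probability to a constant arbitrarily close to $1$ (Opening Phase, via the recurrence $p_{i+1}=\frac12+\frac12 p_i^2$), with the w.h.p.\ events organized so that only constantly many union bounds are taken. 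Your sketch names none of this machinery, and the $O(1)$-level claim as stated fails. Finally, the tracking step itself needs the check that each $\phi$ is switchable every time $U$ flips: with cascade degree $\approx\frac{\lambda}{1-\lambda}|U|$ the requirement is $|U|\ge\frac{1+\lambda}{2(1-\lambda)}|U|$, which holds only because $\lambda<\frac13$, so your $O(\epsilon)$ inflations must also be budgeted against this slack rather than absorbed silently.
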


\noindent In a simplified formulation, this means that there exists a constant $c>0$ such that there is a construction with a stabilization time of $\Omega(n^{1+c})$ in this setting.

We divide our construction technique into five main \textit{phases}, and discuss them separately. In each phase of the construction, we will refer to some edges of the nodes as \textit{output} edges, which go to the following phase of the construction. In a specific phase, we always achieve a desired behavior without any change on these output neighbors yet. An overview of the entire construction is available in Figure \ref{fig:prop}.

As before, we define our construction in terms of a parameter $m=\widetilde{\Theta}(n)$, and discuss the value of $m$ in the end.

\begin{figure}
\centering
	\scalebox{.99}{\input{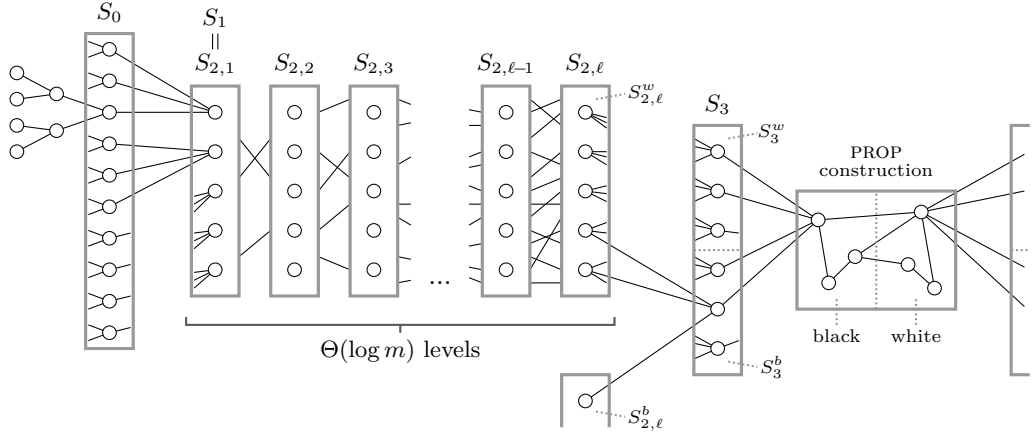}}
	\caption{High-level illustration of the proportional lower bound construction for any $\lambda< \frac{1}{3}$.}
	\label{fig:prop}
\end{figure}

\begin{itemize}

\setlength\topsep{2pt}
\setlength\itemsep{4pt}

\item First, in the \textit{Opening Phase}, our goal is to create a set $S_0$ of constant-degree nodes such that (i) each node in $S_0$ has $1$ output edge to the next phase, and (ii) for any parameter $p<1$, we can switch each node in $S_0$ to black with a probability of at least $p$, independently from the remaining nodes.

\item In the \textit{Collection Phase}, we use our Opening Phase construction to produce another set $S_1$ where (i) each node in $S_1$ has $c_0 \cdot \log{n}$ output edges for a large enough constant $c_0$, and (ii) w.h.p. we can switch all the nodes in $S_1$ to black.

\item In the \textit{Growing Phase}, we begin with this node set $S_{2,1}:=S_1$, and add a range of further \textit{levels} $S_{2,2}$, $S_{2,3}$, ... of the same size. Every level $S_{2,i}$ is only connected to the previous and next levels $S_{2,i-1}$ and $S_{2,i+1}$. The levels will have an exponentially increasing output degree, and hence in at most $\ell \approx \log{m}$ steps, we arrive at a final level $S_{2,\ell}$ where each node has an output degree of $\Theta(m)$. As in case of $S_1$, we show that we can w.h.p. turn each node in $S_{2,i}$ (and finally, in $S_{2,\ell}$) black.

\item In the \textit{Control Phase}, we use $S_{2,\ell}$ to produce a set $S_3$ where each node still has an output degree of $\Theta(m)$. We will ensure that (i) there is a specific point in the process where each node in $S_3$ is switchable to black, and (ii) later, there is a specific point in the process where each node in $S_3$ is switchable to white.

\item Finally, in the \textit{Simulation Phase}, we take an instance of the \textsc{prop} construction, and we use our set $S_3$ to force each node in this construction to take the desired ``initial'' color. We can then simulate the behavior of \textsc{prop} as a black box, which is known to provide a superlinear stabilization time from this artificially enforced worst-case initial coloring.

\end{itemize}

\noindent In this section, we outline the main ideas behind each of these phase. More details of the construction are discussed in Appendix \ref{App:B}.

We note that the second and third phases can be generalized to any $\lambda$ up to $\frac{1}{2}$; however, there is no straightforward way to do this for the remaining phases.

\subsection{Opening Phase} \label{sec:opening}

To construct the set $S_0$, first consider a node $v$ with $d_{v}=3$: one neighbor labeled as an output, and two further neighbors $u_1$ and $u_2$. Initially, we have an $\frac{1}{2}$ chance that $v$ is already black. Even if $v$ is not black initially, we can switch it black if both $u_1$ and $u_2$ are black initially: we have $\frac{1+\lambda}{2}<\frac{2}{3}$, so 2 black neighbors out of 3 are indeed enough to make $v$ switchable. The probability that initially $v$ is white but $u_1$ and $u_2$ are black is $\left( \frac{1}{2} \right)^3=\frac{1}{8}$, so altogether, we can turn $v$ black with a probability of $p_1=\frac{5}{8}$.

Now assume that we take two such nodes that can be switched black with probability $\frac{5}{8}$, we denote them by $u'_1$ and $u'_2$, and we connect their outputs to a new node $v'$. Again, $v'$ is already black initially with probability $\frac{1}{2}$; if not, we can turn $v'$ black if both $u'_1$ and $u'_2$ are switched black, which happens with a probability of $p_1^{\ 2}$. This provides a black $v'$ with a probability of $p_2=\frac{1}{2}+\frac{1}{2} \cdot \left( \frac{5}{8} \right)^2=\frac{89}{128}$.

We can continue this in a recursive manner, always taking two copies of the previous construction, and connecting them to a new root node. After $i$ steps, we end up with a full binary tree on $2^{i+1}-1$ nodes. This provides a black root node with a probability of $p_i$, defined by the recurrence 
\[p_0=\frac{1}{2} \: \quad \text{ and } \: \quad p_{i+1} = \frac{1}{2}+\frac{1}{2} \cdot p_i^{\ 2}.\]
One can easily show that $\lim_{i \rightarrow \infty} \, p_i=1$. Hence for any constant parameter $p<1$, there is an $i$ such that $p_i \geq p$, and thus creating $i$ layers with this method ensures that we can switch the final node black with probability at least $p$.

In order to build our set $S_0$, we can simply take $m_0=|S_0|$ independent copies of this tree. Since $p$ is a constant, $i$ and the tree size $2^{i+1}-1$ are also constants; thus the whole phase only requires $O(m_0)$ nodes altogether.

\subsection{Collection Phase}

Let us introduce a logarithmic parameter $d_0=c_0 \cdot \log{n}$. Given our Opening Phase construction $S_0$, our next step is to create a smaller set $S_1$ on $m_1 = \frac{1}{4 \cdot d_0} \cdot m_0$ nodes. Recall that all the $m_0$ nodes in $S_0$ had exactly $1$ output edge; this allows us to connect each $v \in S_1$ to $4 \cdot d_0$ distinct nodes in $S_0$. We also add $d_0$ further output edges to each $v \in S_1$ to provide a connection to the next phase.

Since each node in $S_0$ becomes black with probability $p$ independently, a Chernoff bound shows that $v$ has at least $(p-\epsilon) \cdot 4 \cdot d_0$ black neighbors in $S_0$ with a probability of $1-O(n^{-2})$. This already makes $v$ switchable to black, since $d_v=5 \cdot d_0$, and thus for the appropriate $p$ and $\epsilon$ values we have 
\[ \frac{(p-\epsilon) \cdot 4 \cdot d_0}{5 \cdot d_0} \, \approx \, \frac{4}{5} \, > \, \frac{2}{3} > \frac{1+\lambda}{2} \,. \]
Applying a union bound over all nodes $v \in S_1$, we get that w.h.p. the entire set $S_1$ can be switched to black.

\subsection{Growing Phase}

Given our set $S_1$ from the Collection Phase, the next step is to iteratively build a range of levels $S_{2,i}$ for $i=1,2,...$ . Each of these levels has the same size $|S_{2,i}|=m_1$, but on the other hand, their degrees increase exponentially: the output degree of each node in $S_{2,i+1}$ is always twice as big as the output degree of the nodes in $S_{2,i}$.

We achieve this by connecting every pair of subsequent levels as a regular bipartite graph. Let us begin with $S_{2,1}:=S_1$. Recall that each node in $S_1$ has $d_0$ output edges, so $S_{2,1}$ and $S_{2,2}$ will form a $d_0$-regular bipartite graph. We then connect $S_{2,2}$ and $S_{2,3}$ as a $2 \cdot d_0$-regular bipartite graph, $S_{2,3}$ and $S_{2,4}$ as a $4 \cdot d_0$-regular bipartite graph, and so on. Thus in any level, we have a value $d$ such that each node has $d$ edges to the previous and $2 d$ edges to the next level, and this value $d$ doubles with each new level. Since the degrees grow exponentially, after about $\log{m_1}$ levels, we reach a last level $S_{2,\ell}$ where the output degree is $\Theta(m_1)$.

We use an induction to prove that we can w.h.p. turn all nodes black in each $S_{2,i}$. This is already known for $S_{2,1}=S_1$ initially. In the general case, let $v$ be an arbitrary node of $S_{2,i}$. Since each $v$ has at least $d_0$ output edges to $S_{2,i+1}$, we can use Lemma \ref{lem:epsbalanced_initially} to show that the output neighborhood of every node is initially $\epsilon$-balanced. This means that for any $v \in S_{2,i}$, at least $(\frac{1}{2}-\epsilon) \cdot 2 d = (1-2\epsilon) \cdot d$ outputs are already black initially. Due to the induction, we can turn all the $d$ remaining neighbors in $S_{2,i-1}$ black, altogether giving $(2 - 2\epsilon) \cdot d$ black neighbors of $v$. With $d_v=3 \cdot d$, this amounts to a ratio of $\frac{2-2\epsilon}{3}$ black nodes in $N(v)$. Since we have $\frac{1+\lambda}{2} = \frac{2}{3} - \delta$, a sufficiently small choice of $\epsilon$ always ensures that this ratio is above $\frac{1+\lambda}{2}$, and thus $v$ is switchable to black. Hence each node in $S_{2,i}$ can indeed be turned black, which completes our induction.

\subsection{Control Phase}

In the following Control Phase, we create a new set $S_3$ on $m_3$ nodes. The goal of this phase is to ensure that at a specific point in the process, each $v \in S_3$ switches to black, and then at a later point, each $v \in S_3$ is switchable to white.

In order to be able to initialize a \textsc{prop} construction on $m$ nodes in the final phase, each node in $S_3$ will have an output degree of $m$, for some parameter $m$. A detailed analysis shows that for a large constant $\alpha>1$, a choice of $m_3 = \frac{1}{\alpha} \cdot m_1$ and $m = \frac{1}{2} \cdot m_3$ suffices for our purposes.

To achieve the desired switching behavior for $S_3$, we first create two copies of the previous phases: one of them ending with a level $S_{2,\ell}^{b}$ on $\alpha \! \cdot \! m$ nodes where w.h.p. each nodes switches to black, and the other one ending with a last level $S_{2,\ell}^{w}$ on $2\alpha \! \cdot \! m$ nodes where w.h.p. each node switches to white in a symmetric manner. We connect each node in $S_3$ to every node in both $S_{2,\ell}^{b}$ and $S_{2,\ell}^{w}$. As a result, each $v \in S_3$ has a degree of $d_v=(3 \alpha \! + \! 1) \! \cdot \! m$. Note that the output degree of both $S_{2,\ell}^{b}$ and $S_{2,\ell}^{w}$ is $\Theta(m_1) = \Theta(\alpha \cdot m_3)$, so for $\alpha$ large enough, they can indeed be connected to each node in $S_3$.

Now consider the neighbors of a node $v \in S_3$. First $S_{2,\ell}^{b}$ becomes black and $v$'s neighborhood in $S_{2,\ell}^{w}$ is $\epsilon$-balanced; this gives at least $\alpha \cdot m + (\frac{1}{2}-\epsilon) \cdot 2\alpha \cdot m = 2\alpha \cdot m \cdot (1-\epsilon)$ black neighbors in $N(v)$, amounting to a $\frac{2 \alpha \cdot (1-\epsilon)}{3\alpha+1}$ fraction of $d_v$. As $\frac{1+\lambda}{2} = \frac{2}{3} - \delta$, for a sufficiently small $\epsilon$ and sufficiently large $\alpha$, we can ensure that this ratio is larger than $\frac{1+\lambda}{2}$, and thus $v$ is indeed switchable. We switch each $v \in S_3$ to black at this point.

After this, we turn each node in $S_{2,\ell}^{w}$ white. Nodes in $S_3$ now have $2\alpha \cdot m$ white neighbors at least; this again ensures that each $v \in S_3$ is now switchable to white. However, for our purposes in the last phase, we will only switch half of the nodes in $S_3$ white at this point (denoted by $S_3^{w}$), and leave the remaining part black (denoted by $S_3^{b}$).

\subsection{Simulation Phase}

Finally, we use the \textsc{prop} construction on $m$ nodes to obtain superlinear stabilization time. Given a node $v$ in \textsc{prop}, assume w.l.o.g. that $v$ is initially black in the example sequence of \textsc{prop}; we can apply the same technique for white nodes in a symmetric manner.

Our main idea is to connect $v$ to some new nodes in $S_3^{b}$ and $S_3^{w}$. When $S_3^{b}$ and $S_3^{w}$ both switch to black, this allows us to switch $v$ to its desired initial color (black). Then when $S_3^{w}$ switches back to white, the new neighbors become balanced, and thus the switchability of $v$ will again depend on its original neighbors within \textsc{prop}. However, with these extra connections, the original $N(v)$ is now only a smaller fraction of $v$'s total neighborhood, so this only allows us to simulate \textsc{prop} with a smaller parameter $\lambda'<\lambda$.

More specifically, if $v$ has original degree $d_v'$ within the \textsc{prop} construction, then we connect $v$ to $\frac{1}{2} \cdot \frac{1+\lambda}{1-\lambda} \cdot d_v'$ arbitrary nodes in both $S_3^{b}$ and $S_3^{w}$. We point out that our choice of $m= \frac{1}{2} \cdot m_3$ is indeed sufficient for this: since $\lambda < \frac{1}{3}$ implies $\frac{1+\lambda}{1-\lambda} < 2$, every node in the \textsc{prop} construction needs at most $\frac{1}{2} \cdot \frac{1+\lambda}{1-\lambda} \cdot d'_v \, < \, d'_v$ new edges to both $S_3^b$ and $S_3^w$. Hence with $d'_v<m$ in the \textsc{prop} construction, it is indeed enough to have $m$ nodes in the sets $S_3^b$ and $S_3^w$. Furthermore, since each node in $S_3$ has an output degree of $m$, we can also connect a node in $S_3^b$ or $S_3^w$ to as many nodes in the \textsc{prop} construction as necessary.

With $v$ connected to $\frac{1}{2} \cdot \frac{1+\lambda}{1-\lambda} \cdot d_v'$ nodes in both $S_3^{b}$ and $S_3^{w}$, the new degree of $v$ is now
\[ d_v= \left( 1 + \frac{1+\lambda}{1-\lambda} \right) \cdot d_v' = \frac{2}{1-\lambda} \cdot d_v' \, , \]
so $v$ requires $\frac{1+\lambda}{2} \cdot d_v = \frac{1+\lambda}{1-\lambda} \cdot d_v'$ conflicts to be switchable. Hence when $S_3^{b}$ and $S_3^{w}$ are both switched black, this is already enough to switch $v$ black, since the two sets provide $2 \cdot \frac{1}{2} \cdot \frac{1+\lambda}{1-\lambda} \cdot d_v' = \frac{1+\lambda}{1-\lambda} \cdot d_v'$ black neighbors to $v$ together. Later $S_3^{w}$ switches to white; then for the rest of the process, $v$ has $\frac{1}{2} \cdot \frac{1+\lambda}{1-\lambda} \cdot d_v'$ neighbors of both colors in $S_3$.

Let us now select $\lambda'=\frac{2 \lambda}{1- \lambda}$, and apply the \textsc{prop} construction for $\lambda'$ as a black box. If $v$ was switchable in the original \textsc{prop} construction at some point, then it had at least $\frac{1+\lambda'}{2} \cdot d_v' \, = \, \frac{1}{2} \cdot \frac{1+\lambda}{1-\lambda} \cdot d_v'$ conflicts within \textsc{prop}. Then together with the $\frac{1}{2} \cdot \frac{1+\lambda}{1-\lambda} \cdot d_v'$ additional conflicts to either $S_3^{b}$ or $S_3^{w}$, $v$ has at least $\frac{1+\lambda}{1-\lambda} \cdot d_v' = \frac{1+\lambda}{2} \cdot d_v$ conflicts in our construction, and thus it is indeed switchable.

Hence we can indeed simulate the behavior of \textsc{prop} in our construction: whenever $v$ is switchable in the original \textsc{prop} graph, it is also switchable in our construction. This allows us to run the entire sequence of $m^{1+f(\lambda') -\varepsilon}$ steps in \textsc{prop}, giving a sequence of $m^{1+f( \frac{2 \lambda}{1-\lambda} ) -\varepsilon}$ steps in terms of our $\lambda$.

One can observe that our constructions contains only $O(m \cdot \log{m})$ nodes altogether, thus allowing a choice of $m=\Theta(\frac{n}{\log{n}})$. This results in about
\[ n^{1+f\left( \frac{2 \lambda}{1-\lambda} \right) -\varepsilon} \, \cdot \, \log{n}^{\,-(1+f\left( \frac{2 \lambda}{1-\lambda} \right) -\varepsilon)} \]
steps for the \textsc{prop} sequence in terms of $n$. Since such a \textsc{prop} construction exists for any $\varepsilon>0$, we can get rid of the second factor in this lower bound by simply applying the same proof with a smaller value $\hat{\varepsilon}<\varepsilon$. Thus the claim of Theorem \ref{th:lower} follows.

\section{Proportional switching: upper bound for $\lambda > \frac{1}{2}$} \label{sec:upper}

We now show that with $\lambda = \frac{1}{2} + \delta$ for some $\delta > 0$, stabilization happens w.h.p. in $\widetilde{O}(n)$ time. The only probabilistic element of this proof is the assumption that initially all high-degree nodes have an $\epsilon$-balanced neighborhood; this indeed holds w.h.p., as we have seen before in Corollary \ref{lem:epsbalanced_neighborhoods}.

The idea of the proof is that even though there might be $\Theta(n^2)$ conflicts in the graph initially, only a few of these conflicts can propagate through the graph. Let us call a conflict on edge $(u,v)$ in our current coloring an \textit{original conflict} if it has been on the edge since the beginning of the process, i.e. if every previous state (including the initial state) already had a conflict on $(u,v)$.

\begin{definition}[Active/Rigid conflicts]
We say that a conflict on edge $(u,v)$ is \emph{rigid} if it is an original conflict, and both $u$ and $v$ are high-degree nodes. Otherwise, the conflict is \emph{active}.
\end{definition}

Our proof is obtained as a result of three observations: that (i) there are only a few active conflicts in the graph initially, (ii) the number of active conflicts decreases in each step of the process, and (iii) the process stabilizes when there are no more active conflicts. Since the second point is the most complex out of the three claims, we first discuss it separately.

\begin{lemma} \label{active_decrease}
The number of active conflicts strictly decreases in each step.
\end{lemma}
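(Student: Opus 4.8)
The plan is to fix the single switching step in which some node $v$ flips its colour, and to keep precise track of how the status of every edge — ``rigid conflict'', ``active conflict'', or ``not a conflict'' — changes during that step. Throughout, a conflict is viewed as a property of an undirected edge (a conflict on $(u,v)$ occurs iff one on $(v,u)$ does, for both processes).

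The first thing I would do is restrict attention to the edges incident to $v$. If an edge $(x,y)$ has $v\notin\{x,y\}$, then the colours of $x$ and $y$ are unchanged, so its conflict status is unchanged; moreover its ``original'' status does not change (an edge that has been a conflict at every earlier time is still a conflict now, and an edge that was non-original has a witnessing earlier non-conflict state that still witnesses this), and the degrees of $x$ and $y$ are unchanged. Hence such an edge stays rigid / stays active / stays non-conflicting, and the total number of active conflicts can only change through the $d_v$ edges at $v$. So it suffices to show that the number of active conflicts on edges incident to $v$ strictly decreases.

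Next I would analyse the post-switch picture. When $v$ flips its colour, the conflict status of every incident edge toggles (both for majority and minority, since ``same colour''/``opposite colour'' is symmetric and is reversed when $v$ flips). Thus the $|N_c(v)|\ge\frac{1+\lambda}{2}\,d_v$ previously conflicting edges of $v$ all become non-conflicting, and the remaining $d_v-|N_c(v)|\le\frac{1-\lambda}{2}\,d_v$ edges become conflicting. The crucial point — which is easy to overlook — is that each of these newly conflicting edges was \emph{not} a conflict in the state immediately before the switch, so it is not an original conflict and in particular not rigid; hence immediately after the switch $v$ carries exactly $d_v-|N_c(v)|\le\frac{1-\lambda}{2}\,d_v$ active conflicts on its incident edges.

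For the pre-switch picture I would lower-bound the number of active conflicts at $v$ by upper-bounding the number $r$ of rigid conflicts on edges incident to $v$. Every rigid conflict is in particular an original conflict, hence was already a conflict in the initial colouring, so $r$ is at most the number of conflicting edges at $v$ in the initial state. If $v$ is low-degree then $r=0$ by definition of rigidity; if $v$ is high-degree, then — after fixing a constant $\epsilon<\delta$, which only pins down the threshold $c_0$ hidden in ``high-degree'' — Corollary~\ref{lem:epsbalanced_neighborhoods} gives that $N(v)$ is initially $\epsilon$-balanced, so $v$ has at most $\bigl(\frac12+\epsilon\bigr)d_v$ conflicting edges initially and therefore $r\le\bigl(\frac12+\epsilon\bigr)d_v$. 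In both cases $r\le\bigl(\frac12+\epsilon\bigr)d_v$, so before the switch $v$ has at least $|N_c(v)|-r\ \ge\ \frac{1+\lambda}{2}d_v-\bigl(\frac12+\epsilon\bigr)d_v=\bigl(\frac{\lambda}{2}-\epsilon\bigr)d_v$ active conflicts on its incident edges. Writing $\lambda=\frac12+\delta$ and using $\epsilon<\delta$, this is strictly larger than $\frac{1-\lambda}{2}d_v$, the bound from the previous paragraph; combined with the first paragraph, the total number of active conflicts strictly decreases (and, being a non-negative integer, it drops by at least one). The step I expect to require the most care is this last one: the whole argument hinges on the fact that a high-degree $v$ is $\epsilon$-balanced at the start, so at most about half of its edges can ever carry an original (hence rigid) conflict, while the switching rule forces a $\frac{1+\lambda}{2}>\frac12+\epsilon$ fraction of its edges to be conflicting at the moment it switches — everything else is bookkeeping.
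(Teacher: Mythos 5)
Your proposal is correct and follows essentially the same argument as the paper: bound the rigid conflicts at the switching node by $(\frac12+\epsilon)d_v$ via the initial $\epsilon$-balance (zero for low-degree nodes), note that all newly created conflicts are active and number at most $\frac{1-\lambda}{2}d_v$, and conclude from $\lambda>\frac12+\epsilon$. Your only cosmetic differences are the explicit check that edges not incident to $v$ keep their status and the unified before/after count in place of the paper's two-case destroyed-versus-created accounting.
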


\begin{proof}
Consider a specific step of the process, and let $v$ be the node that switches in this step. Assume first that $v$ is a low-degree node. In this case, $v$ can only have active conflicts on its incident edges at any point in the process: initially, all conflicts of $v$ are active by definition, and all the newly created conflicts in the process are also active. Since the number of conflicts on $v$'s incident edges decreases when $v$ switches, the total number of active conflicts also decreases in this step.

Now assume that $v$ is a high-degree node. Since $N(v)$ is initially $\epsilon$-balanced, it has at most $(\frac{1}{2}+\epsilon) \cdot d_v$ rigid conflicts in the beginning, and since all the newly created conflicts in the process are active, it also has at most $(\frac{1}{2}+\epsilon) \cdot d_v$ rigid conflicts at any later point in the process. However, if $v$ switches, then it must have at least $\frac{1+\lambda}{2} \cdot d_v$ incident conflicts; this implies that at least $\frac{1+\lambda}{2} \cdot d_v - (\frac{1}{2}+\epsilon) \cdot d_v$ of these conflicts are active. When $v$ switches, it creates at most $\frac{1-\lambda}{2} \cdot d_v$ new (active) conflicts. Thus, to show that the number of active conflicts decreases, we only require
\[ \frac{1+\lambda}{2} \cdot d_v - \left( \frac{1}{2}+\epsilon \right) \cdot d_v > \frac{1-\lambda}{2} \cdot d_v, \]
which is equivalent to $\lambda > \frac{1}{2} + \epsilon$. This holds for a sufficiently small choice of $\epsilon < \delta$.
\end{proof}

\noindent This already allows us to prove our upper bound.

\begin{theorem}
Consider majority/minority processes under the proportional switching rule for any $\lambda > \frac{1}{2}$, starting from a uniform random initial coloring. Any graph has a stabilization time of $O(n \cdot \log{n})$ with high probability in this model.
\end{theorem}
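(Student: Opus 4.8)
The plan is to combine Lemma~\ref{active_decrease} with the two elementary observations flagged just before it. Write $\Phi$ for the number of active conflicts in the current state; $\Phi$ is a nonnegative integer.

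\textbf{Step 1: the initial number of active conflicts is $O(n\log n)$.} At the start every conflict is an original conflict, so by definition a conflict is active initially if and only if at least one of its endpoints is a low-degree node. Each low-degree node is incident to fewer than $c_0 \log n$ edges, and there are at most $n$ such nodes, so initially $\Phi \le n \cdot c_0 \log n = O(n \log n)$. Here $c_0$ is the constant from Corollary~\ref{lem:epsbalanced_neighborhoods}; we fix $\epsilon < \delta$ exactly as in the proof of Lemma~\ref{active_decrease} and take the corresponding $c_0$, so that both the $\epsilon$-balance of all high-degree neighborhoods and the inequality $\lambda > \frac{1}{2} + \epsilon$ hold simultaneously.

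\textbf{Step 2: a state with $\Phi = 0$ is stable.} Suppose for contradiction that some node $v$ is switchable, so $|N_c(v)| \ge \frac{1+\lambda}{2}\, d_v$. If $v$ is low-degree, then all of its incident conflicts are active (they are active initially by definition, and every conflict created later is active), whence $\Phi \ge |N_c(v)| > 0$, a contradiction. If $v$ is high-degree, then as in Lemma~\ref{active_decrease} it has at most $(\frac{1}{2}+\epsilon)\, d_v$ rigid conflicts at any point in the process; since $\frac{1+\lambda}{2} > \frac{1}{2}+\epsilon$, at least $\big(\frac{1+\lambda}{2} - \frac{1}{2} - \epsilon\big) d_v > 0$ of its conflicts must be active, again contradicting $\Phi = 0$. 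Hence no node is switchable and the state is stable.

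\textbf{Step 3: assemble.} By Lemma~\ref{active_decrease}, $\Phi$ strictly decreases in every step; being a nonnegative integer whose initial value is $O(n\log n)$, it can decrease at most $O(n\log n)$ times before reaching $0$, at which point Step~2 guarantees the process has stabilized. Therefore the stabilization time is $O(n\log n)$. The only appeal to randomness is the event of Corollary~\ref{lem:epsbalanced_neighborhoods} that all high-degree neighborhoods are initially $\epsilon$-balanced, which holds w.h.p.; conditioned on that event the bound holds deterministically, which yields the w.h.p. statement.

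I do not expect a serious obstacle in this last stretch: the conceptual core — the potential argument that separates conflicts into \emph{rigid} ones, which can never propagate, from \emph{active} ones, which monotonically decay — has already been carried out in Lemma~\ref{active_decrease}. Step~1 is a one-line degree count and Step~2 reuses the same degree inequality $\lambda > \frac{1}{2}+\epsilon$. The only point deserving a little care is ensuring that a single choice of $\epsilon$ (hence a single $c_0$, hence a single high-probability event) serves all three observations at once, which it does as soon as $\epsilon < \delta$.
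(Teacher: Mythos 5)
Your proposal is correct and follows essentially the same route as the paper's own proof: bound the initial number of active conflicts by $O(n\log n)$ via the low-degree nodes, invoke Lemma~\ref{active_decrease} for the strict decrease, and argue that a state with no active conflicts is stable because high-degree nodes have at most $(\frac{1}{2}+\epsilon)d_v < \frac{1+\lambda}{2}d_v$ rigid conflicts while low-degree nodes have none. Your added care about a single $\epsilon < \delta$ serving both inequalities is a valid (and slightly more explicit) packaging of the same argument, since $\delta = \lambda - \frac{1}{2} \le \frac{\lambda}{2}$ for $\lambda \le 1$.
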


\begin{proof}

In any initial coloring, the number of active conflicts is at most $O(n \cdot \log{n})$: each low-degree node has at most $c_0 \cdot \log{n}$ incident edges, and the number of low-degree nodes is at most $n$. Lemma \ref{active_decrease} shows that the number of active conflicts decreases in each step, so there are no active conflicts in the graph after at most $O(n \cdot \log{n})$ steps.

Once there are no more active conflicts, the coloring is stable, since nodes cannot be switchable without an active conflict on the incident edges. More specifically, due to the $\epsilon$-balanced property, all high-degree nodes $v$ have at most $(\frac{1}{2}+\epsilon) \cdot d_v$ rigid conflicts on the incident edges, which is smaller than $\frac{1+\lambda}{2} \cdot d_v$ if we have $\epsilon < \frac{\lambda}{2}$. Low-degree nodes, on the other hand, can never have rigid conflicts on the incident edges at all. Thus the process indeed stabilizes in $O(n \cdot \log{n})$ steps.
\end{proof}

\section{Conclusion}

Our results show that the behavior of the processes from a randomized initial coloring is rather straightforward in case of the basic switching rule: stabilization time can indeed tightly match the naive upper bound of $O(n^2)$.

However, in case of proportional switching, our work does leave some open questions. Figure \ref{fig:func} illustrates our upper and lower bounds for this case. The most apparent open question is the behavior of the process for the $\lambda \in [\frac{1}{3}, \frac{1}{2}]$ case; in this interval, we only have the straightforward lower bound of Section \ref{sec:triv_lower}. While the figure gives the impression that stabilization time might also have a $\widetilde{O}(n)$ upper bound in this case, it remains for future work to prove or disprove this claim.

Furthermore, even for $\lambda < \frac{1}{3}$ when stabilization is known to be superlinear, one might also be interested in devising upper bounds. Currently, the best known upper bound is that of $O(n^{1+f(\lambda)+\varepsilon})$ from \cite{prop}, which even applies for the worst-case initial coloring.

\begin{figure}
\centering
	\includegraphics[width=0.63\textwidth]{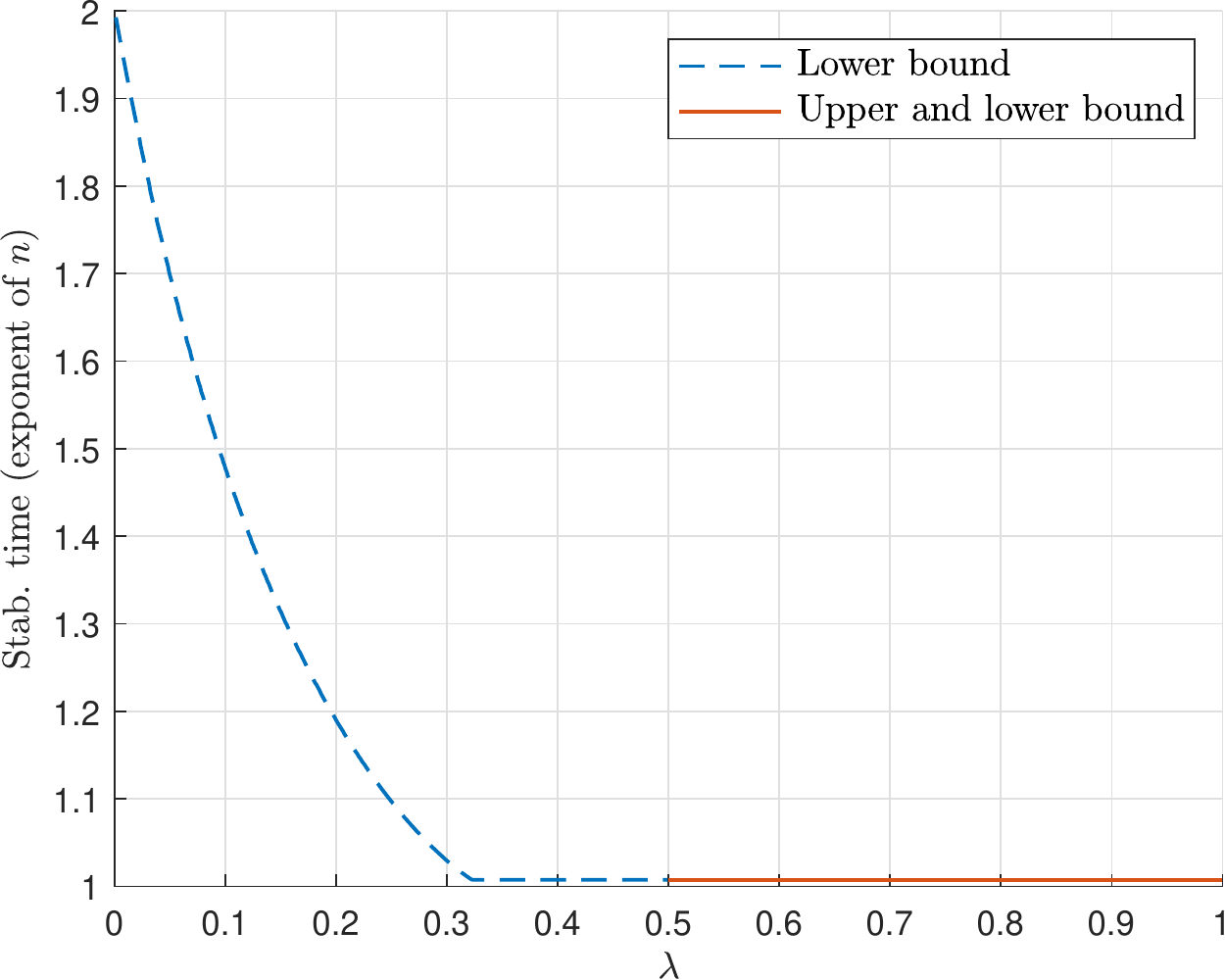}
	\caption{Our upper and lower bounds on stabilization time in the proportional case.}
	\label{fig:func}
\end{figure}

\newpage

\bibliography{references}

\newpage

\begin{appendices}

\section{Techniques from Probability Theory} \label{App:Z}

In our proofs, we regularly use basic concepts and techniques from probability theory. In particular, our results also apply the following two well-known lemmas \cite{probability}:

\begin{itemize}

\setlength\itemsep{4pt}

 \item \textit{Union Bound:} for any events $A_1$, $A_2$, ..., $A_k$, we have 
\[ \text{Pr}\left(\bigcup_{i=1}^k A_i \right) \: \leq \: \sum_{i=1}^k \, \text{Pr}(A_i). \]
	
 \item \textit{Chernoff Bound:} let $X_1$, $X_2$, ..., $X_k$ be independent Bernoulli random variables with $\text{Pr}(X_i=1) = \frac{1}{2}$ for all $i$. Then for any $\epsilon \in (0,1)$, we have
\[ \text{Pr}\left(\left|\sum_{i=1}^k  X_i \, - \frac{k}{2}\right| \, \geq \, \epsilon \cdot \frac{k}{2} \right) \: \leq \: 2 \cdot e^{- \frac{1}{6} \cdot \epsilon^2 \cdot k} .\]
\end{itemize}

For convenience, we have stated the Chernoff bound for the simplest case of $\text{Pr}(X_i=1) = \frac{1}{2}$, since this is the case for the vast majority of random variables in our analysis. However, we also apply the general version of the Chernoff bound with $\text{Pr}(X_i=1) = p$ on one occasion in the analysis of the Collection Phase, and we also use the bound with a non-constant $\epsilon$ value in the analysis of our majority process construction for basic switching.

Furthermore, we say that an event happens \textit{with high probability} (\textit{w.h.p.}) if it happens with a probability of at least $1-O\left(\frac{1}{n^{c}}\right)$ for some $c>0$. Note that some works use a more relaxed definition of this concept, already accepting any probability of $1-o(1)$ as w.h.p.. Naturally, our results also hold with this more relaxed definition. 

\section{More details on the basic switching constructions} \label{App:A}

\subsection{Minority process construction}

The analysis of the minority construction is rather straightforward. To set $A_0$ and $B_0$ to the appropriate (different) colors, we only require that $A_0$ has an odd degree (to switch $A_0$ to one color) and $|A_0|>|B|$ (to switch $B_0$ to the other); this is satisfied in our graph. Hence, we can begin the sequence of switches in the graph by switching $A_0$ entirely to one color (w.l.o.g. white) and $B_0$ to the other color.

A Chernoff bound then shows that both $A$ and $B$ initially contains at least $(1-\epsilon) \cdot \frac{m}{2} \, \geq \, \frac{m}{3}$ nodes of both colors w.h.p.. This implies that there are $\frac{m}{3}$ white nodes in $A$ that all want to switch to black, and $\frac{m}{3}$ black nodes in $B$ that all want to switch to white. Until these nodes are switched to this preferred color, they all remain switchable regardless of the current color of their neighbors in $P$.

Another Chernoff bound shows that for any small constant $\epsilon$, the initial number of black nodes in the neighborhood of $P$ is also w.h.p. within $[(\frac{1}{2}-\epsilon) \cdot m, \, (\frac{1}{2}+\epsilon) \cdot m]$. This means that by switching at most $\epsilon \cdot m$ of these switchable nodes in either $A$ or $B$, we can ensure that $P$ has exactly one more black neighbors than white. Recall that for convenience, we added an extra neighbor to $P$ in order for $P$ to have an odd degree, too. A choice of a sufficiently small $\epsilon$ ensures that after this, we still have at least $\frac{m}{3} - \epsilon \cdot m > \frac{m}{4}$ switchable nodes in both $A$ and $B$.

We can then execute the alternating sequence in a similar fashion to the original construction in \cite{minority}. We first switch one of the $\frac{m}{4}$ switchable nodes in $A$ to black; then $P$ will have 1 more black neighbors than white, so we can switch the entire group $P$ to white as a result. We then switch one of the $\frac{m}{4}$ switchable nodes in $B$ to white; $P$ now has 1 more white neighbors than black, so we can switch all nodes in $P$ to black. Selecting the switchable nodes from $A$ and $B$ in an alternating fashion, we can create such an alternating sequence of $2 \cdot \frac{m}{4}$ nodes from $A \cup B$, and after each step of this sequence, we can switch all nodes in $P$ again. Since $P$ consists of $m$ nodes, this provides a minority process of at least $\frac{m}{2} \cdot m$ switches. As we have $m=\Theta(n)$, this implies a stabilization time of $\Omega(n^2)$.

\subsection{Majority process construction}

For majority constructions, let us select a constant $c_0$, and introduce the notation $h:=c_0 \cdot \log{m}$. We then take $h$ further distinct groups $A_{1, 1}$, $A_{1, 2}$, $...$, $A_{1, h}$, with each of them consisting of $\frac{m}{h}$ nodes, and connect them each to group $A_0$ (via a complete bipartite connection). For convenience, we assume that $m$ is divisible by $h$, and that $h$ is an odd number. Besides this, we also add $h$ distinct groups $A_{2, 1}$, $A_{2, 2}$, $...$, $A_{2, h}$, with each of these consisting of $\frac{m}{h}$ nodes, too. For each $i \in \{ 1, ..., h\}$, we connect every node in $A_{1,i}$ to every node in $A_{2,i}$. Altogether, the graph consists of $9m+3$ nodes, so we still have $m = \Theta(n)$.

Now let us consider a specific $i \in \{ 1, ..., h\}$. If $h$ is odd, then with a probability of $\frac{1}{2}$, group $A_{1,i}$ contains more white nodes than black nodes initially. This implies that for each node in $A_{2,i}$, the preferred color is white, and thus we can switch each node in $A_{2,i}$ to white (i.e the nodes that were not already white initially). This event happens independently for different $i$ values, since the $A_{1,i}$ are disjoint; hence we can easily show that w.h.p., there exists an $\hat{i} \in \{ 1, ..., h\}$ such that $A_{2,\hat{i}}$ is indeed switchable to white entirely. In particular, the probability that none of the $A_{2,i}$ is switchable to white is $2^{-h}=2^{-c_0 \cdot \log{m}}$, and since $m=\Theta(n)$ implies $\log{m} \geq \frac{1}{2} \cdot \log{n}$ for $n$ large enough, this probability is at most $2^{-2c_0 \cdot \log{n}}$, and thus it is in $O(\frac{1}{n})$ for a sufficiently large choice of $c_0$.

Furthermore, using Lemma \ref{lem:epsbalanced_initially}, one can show that w.h.p. at least $(\frac{1}{2}-\epsilon) \cdot \frac{m}{h}$ of the nodes in $A_{2,\hat{i}}$ were already black initially. This implies that when we turn $A_{2,\hat{i}}$ entirely white, this increases the number of white nodes in $A_{2,\hat{i}}$ by $(\frac{1}{2}-\epsilon) \cdot \frac{m}{h} = \Theta(\frac{m}{\log{m}})$ at least.

Now consider a node $v \in A_{1,\hat{i}}$. Each such node has the same neighborhood: $m+1$ neighbors in $A_0$, and $\frac{m}{h}$ neighbors in $A_{2,\hat{i}}$, giving a total degree of $d_v=m+\frac{m}{h}+1$. Note that we have $m < d_v < 2m$ for a sufficiently large $m$.

As the next step, we show that the neighborhood of $v$ is relatively balanced initially. We need a slightly stronger bound here than in the previous cases, so we now apply the Chernoff bound with a non-constant $\epsilon$ value. 
We can choose, say, $\epsilon:=m^{-2/5}$; then the Chernoff bound shows that the probability of $v$ having more than $(\frac{1}{2} + m^{-2/5}) \cdot d_v$ black neighbors initially is at most
\[ 2 \cdot e^{- \frac{1}{6} \cdot m^{-4/5} \cdot d_v } \, \leq \, 2 \cdot e^{-\frac{1}{6} \cdot m^{-4/5} \cdot m } \, = \,  2 \cdot e^{-\frac{1}{6} \cdot m^{1/5}} \,.\]
Furthermore, note that 
\[ \left(\frac{1}{2} + m^{-2/5}\right) \cdot d_v \, = \, \frac{1}{2} \cdot d_v + m^{-2/5} \cdot d_v \, < \, \frac{1}{2} \cdot d_v + m^{-2/5} \cdot 2m \, = \, \frac{1}{2} \cdot d_v + 2 \cdot m^{3/5} \, , \]
so the same upper bound holds for the probability that the number of black nodes is at least $\frac{1}{2} \cdot d_v + 2m^{\frac{3}{5}}$. Hence we can claim w.h.p. that initially, the number of black nodes in the neighborhood of $A_{1,\hat{i}}$ is larger by at most $2 \cdot m^{3/5}$ than the expected value.

Recall that we have turned the entire $A_{2,\hat{i}}$ white, increasing the number of white nodes in $A_{2,\hat{i}}$ by at least $\Theta(\frac{m}{\log{m}})$. Also, note that $\Theta(\frac{m}{\log{m}}) \, > \, 2 \cdot m^{3/5}$ for $m$ large enough. Therefore, if $A_{1,\hat{i}}$ had at least $\frac{1}{2} \cdot (m+\frac{m}{h}+1) - 2 \cdot m^{3/5}$ white neighbors initially, then after increasing this by $\Theta(\frac{m}{\log{m}})$, the group $A_{1,\hat{i}}$ has more white neighbors that black. This allows us to switch the entire $A_{1,\hat{i}}$ to white, too.

We can then apply a very similar argument on the group $A_0$. Altogether, a node $v \in A_0$ has $d_v=2m$ neighbors, and a Chernoff bound shows that at least $m-2 \cdot m^{3/5}$ of these are already white initially. Lemma \ref{lem:epsbalanced_initially} proves that $A_{1,\hat{i}}$ had at least $(\frac{1}{2}-\epsilon) \cdot \frac{m}{h} = \Theta(\frac{m}{\log{m}})$ black nodes initially, so when turning $A_{1,\hat{i}}$ entirely to white, we increase the number of white nodes in $A_{1,\hat{i}}$ by at least $\Theta(\frac{m}{\log{m}})$. This results in at least $m-2 \cdot m^{3/5}+\Theta(\frac{m}{\log{m}}) \, > \, m =\frac{1}{2} \cdot d_v$ white neighbors for $A_0$, so we can switch each node in $A_0$ white.

From here, our construction follows the same idea as the minority case. Turning $A_0$ white already ensures that every black node in $A$ is switchable to white. In a symmetric manner, we can turn each node in $B_0$ black, ensuring that every white node in $B$ is switchable to black. Then we can use the same alternating method as in the minority construction, which implies that we can switch the group $P$ a total of $\Theta(m)$ times altogether. Since we still have $m = \Theta(n)$, this again provides a sequence of $\Omega(m^2)=\Omega(n^2)$ switches.

\section{More details on the proportional switching construction} \label{App:B}

\subsection{Overall analysis}

Let us first discuss the number of nodes in our construction.

Recall that in the Opening Phase, we obtain our $S_0$ by taking $m_0$ independent copies of the tree described in Section \ref{sec:opening}. With $p$, $i$ and $2^{i+1}-1$ being constants, the whole phase requires only $O(m_0)$ nodes.

The Collection Phase then creates a set $S_1$ on $m_1:=\frac{m_0}{4 \cdot d_0}$ nodes; this already determines that $|S_{2,1}|=m_1$, too. Each level of the Growing Phase has the same size, i.e. $|S_{2,i}|=m_1$ for every $i \in \{1, ..., \ell \}$. To reach an output degree of, say, $\frac{1}{2} \cdot m_1$ for every node in $S_{2,\ell}$, we need about $\ell \approx \log{m_1}$ distinct levels.

Then in the Control Phase, we create a set on $|S_3|=m_3=\frac{m_1}{\alpha}$ nodes. Finally, the Simulation Phase uses a \textsc{prop} construction on $m=\frac{1}{2} \cdot m_3$ nodes.

This implies that $m_1=2 \alpha \cdot m$ for the size of the levels $S_{2,i}$. Since $\alpha$ is a constant, this results in a Growing Phase construction of $O(\log{m_1})=O(\log{m})$ distinct levels of size $m_1$, which is altogether still only $O(m \cdot \log{m})$ nodes. Finally, the Opening Phase adds another $O(m_1 \cdot \log{n})$ nodes to this; if $m = \Omega(\sqrt{n})$ and thus $\log{n} \leq 2 \log{m}$, then this is still only $O(m \cdot \log{m})$ nodes. Note that some of the phases also require two distinct copies of the previous parts of the graph, but even with this, each phase only appears constantly many times in our construction. Hence the total number of nodes in the graph is $O(m \cdot \log{m})$, which allows for a choice of $m:=\Omega(\frac{n}{\log{n}})$ with the appropriate constants.

Also, note that there are only constantly many distinct points of the construction where we point out that an event happens w.h.p.. In particular, we use one such assumption in the Collection Phase when we discuss the number of black neighbors developed in the Opening Phase, another one in the Growing Phase when we assume that all output neighborhoods are initially $\epsilon$-balanced, and a final one in the Control Phase when we assume that for each $v\in S_3$, the set of neighbors in $S_{2,\ell}^w$ is initially $\epsilon$-balanced. Our final construction only contains constantly many copies of each of these phases. Thus we only make constantly many such assumptions altogether, which means that we can simply use a union bound to show that w.h.p. all of these assumptions will hold simultaneously. Therefore, we can indeed claim that our entire construction will w.h.p. behave as discussed.

\subsection{Majority constructions to minority constructions}

While our proportional lower bound construction was presented for majority processes, we can easily adapt it to the case of minority processes. Note that each of the first $4$ phases in our construction is a bipartite graph, so we can simply take one of the two color classes in the construction, and swap the role of the two colors in this color class to obtain the same behavior. This technique can be demonstrated most easily in the Growing Phase: if we can make each node in $S_{2,1}$ black, then this allows us to switch each node in $S_{2,2}$ white, then each node in $S_{2,3}$ black again, each node in $S_{2,4}$ white again, and so on. In the end, we can obtain a set $S_3$ with the same property as before.

The original \textsc{prop} construction from \cite{prop} is also a bipartite graph, and from a different initial ordering, it also provides an example sequence where stabilization lasts for $n^{1+f(\lambda)-\varepsilon}$ steps for minority processes. Hence, in an identical way to majority processes, we can now use our set $S_3$ in the Control Phase in order to force the \textsc{prop} construction to first take the desired initial colors, and then we can execute this sequence of switches. This provides an example construction to show the same lower bound in case of minority processes.

One can also observe that the graph presented in Section \ref{sec:simple_maj} (i.e. the lower bound construction for majority processes with basic switching) is also a bipartite graph, and thus a similar method also allows us to convert this to a minority construction that shows a stabilization time of $\Theta(n^2)$. As such, the construction of Section \ref{sec:simple_min} is in fact not needed for the completeness of the paper, and could instead be replaced by a slight modification of the construction in Section \ref{sec:simple_maj}. Nonetheless, we decided to still include the Section \ref{sec:simple_min} construction in the paper because it provides a notably simpler proof of the lower bound in case of minority processes.

\subsection{Details of the Opening Phase}

The main idea of the Opening Phase has already been discussed in Section \ref{sec:lower}. Each node $v \in S_0$ is obtained as the root of a balanced binary tree. By taking all nodes in a leaf-to-root fashion in this tree and turning them black whenever possible, we ensure that the probability of turning a specific node black after $i$ layers is described by the recurrence $p_{i+1} = \frac{1}{2}+\frac{1}{2} \cdot p_i^{\ 2}$. For any desired $p<1$, a constant number of layers is sufficient to ensure that the root $v$ becomes black with a probability of $p_i > p$ in the end.

Thus our construction of $S_0$ consists of $m_0$ independent trees of $i$ layers, where each node in the tree has 2 new neighbors in the following layer (except for the last layer). The set $S_0$ consists of the root nodes of each of these $m_0$ distinct trees. With both $p$ and $i$ being constants, the phase only requires $O(m_0)$ nodes altogether.

Note that it is not straightforward to generalize this technique for $\lambda$ values higher than $\frac{1}{3}$. E.g. for any $\lambda<\frac{1}{2}$, one could devise a similar construction where each node has $3$ input neighbors $u_1$, $u_2$, $u_3$ (since $\lambda<\frac{1}{2}$ implies $\frac{1+\lambda}{2} \leq \frac{3}{4}$), and we similarly end up with a tree of nodes with degree $4$. However, this provides the recurrence $p_{i+1} = \frac{1}{2}+\frac{1}{2} \cdot p_i^{\ 3}$ for the values $p_i$, which does not converge to $1$, but instead to a limit of $\frac{\sqrt{5}-1}{2}$. Hence, this technique does not allow us to turn each node in $S_0$ black with an arbitrarily high probability $p$.

\subsection{Details of the Collection Phase}

Overall, the Collection Phase is the simplest phase in our construction. The set $S_1$ is simply a set of $m_1$ nodes, each having a degree of $5 \cdot d_0$. An Opening Phase of size $m_0=m_1 \cdot 4 \cdot d_0$ provides enough nodes such that each $v \in S_1$ can be connected to $4 \cdot d_0$ distinct nodes in $S_0$. Besides this, each $v$ will also have $d_0$ output edges to the next phase.

If each neighbor of $v$ in $S_0$ becomes black with a probability of at least $p$, then $v$ has at least $p \cdot 4 \cdot d_0$ black neighbors in $S_0$ in expectation. We can then use a Chernoff bound to show that the probability is heavily concentrated around this expectation. Note that this requires the Chernoff bound on general Bernoulli random variables $X_i$ with $\text{Pr}(X_i=1)=p$; for simplicity, in Appendix \ref{App:Z}, we have only stated the bound for the simplest case of $p=\frac{1}{2}$.

Let us select $p=\frac{15}{16}$ in our Opening Phase. Let $\epsilon < \frac{5}{48}$ in order to ensure $\frac{4}{5}\epsilon < \frac{3}{4}-\frac{2}{3}$, and let us define $\hat{\epsilon}:=\frac{\epsilon}{p}$. Furthermore, let $X$ denote the number of black neighbors in $S_0$. Then the Chernoff bound shows that the probability of differing by more than an $\hat{\epsilon}$ multiplicative factor from the expected value is
\[  \text{Pr}\left( X \leq (1-\hat{\epsilon}) \cdot p \cdot 4 \cdot d_0 \right) \: \leq \: e^{-\frac{\hat{\epsilon}^2 \cdot p \cdot 4 \cdot d_0}{2}} = e^{-2\hat{\epsilon}^2 \cdot p \cdot c_0 \cdot \log{n}} = n^{-2\hat{\epsilon}^2 \cdot p \cdot c_0} \,. \]
We can easily ensure that this is in $O(n^{-2})$ by choosing $c_0$ high enough such that $\hat{\epsilon}^2 \cdot p \cdot c_0 \geq 1$. Note that $(1-\hat{\epsilon}) \cdot p \cdot 4 \cdot d_0 = (p-\epsilon) \cdot 4 \cdot d_0$ due to the definition of $\hat{\epsilon}$.

With $d_v=5 \cdot d_0$, this implies a ratio of $\frac{(p-\epsilon) \cdot 4 \cdot d_0}{5 \cdot d_0}=\frac{3}{4}-\frac{4}{5}\epsilon \, > \, \frac{2}{3}$ blacks in the neighborhood, so the event that we cannot switch $v$ black only has a probability of $O(n^{-2})$. Taking a union bound over all $v \in S_1$, we get that we can switch the entire $S_1$ black with a probability of $1-O(n^{-1})$.

Note that we can also easily generalize this phase for any $\lambda<\frac{1}{2}$. A value of $\lambda<\frac{1}{2}$ still implies $\frac{1+\lambda}{2} < \frac{3}{4}$, so we only need to ensure $(p-\epsilon)\cdot \frac{4}{5} \, > \, \frac{3}{4}$ in this case. This is achieved by any $p>\frac{15}{16}$ and a sufficiently small $\epsilon$.

\subsection{Details of the Growing Phase}

The Collection Phase already gives us a set $S_1$ on $m_1$ nodes with each $v \in S_1$ having $d_0 = c_0 \cdot \log{n}$ output edges. We now describe the Growing Phase in a more general form than in Section \ref{sec:lower} to address the case of an arbitrary $\lambda$ value with $\lambda<\frac{1}{2}$. As the key idea of the phase, we select a small parameter $\mu>0$, and we design the levels such that the output degree in $S_{2,i+1}$ is always a $(1+\mu)$ factor larger than the output degree in $S_{2,i}$. Note that in Section \ref{sec:lower}, we discussed the special case of $\mu=1$.

We then build the level-based construction described in Section \ref{sec:lower}. We first select $S_{2,1}=S_1$. We then connect $S_{2,1}$ and $S_{2,2}$ as a $d_0$-regular bipartite graph, we connect $S_{2,2}$ and $S_{2,3}$ as a $(1 + \mu) \cdot d_0$-regular bipartite graph, we connect $S_{2,3}$ and $S_{2,4}$ as a $(1 + \mu)^2 \cdot d_0$-regular bipartite graph, and so on; $S_i$ and $S_{i+1}$ forms a $(1 + \mu)^{i-1} \cdot d_0$-regular bipartite graph. We can always select an arbitrary one among the different possible bipartite graphs to implement the connection between the given levels.

After at most $\log_{(1+\mu)}{m_1}$ such levels, we reach a level $S_{2,\ell}$ where the degree of each node is at least $\frac{1}{2} \cdot m_1$; we will use this last level for the next phase of our construction. Note that since $m_1=\Omega(\frac{n}{\log{n}})$, we also know that $\ell=O(\log{n})$. As each of our levels consist of the same number of nodes $m_1$, we only require $O(m_1 \cdot \log{m_1})$ nodes for this phase altogether. With our choice of $m_0=\Theta(n)$ and $m_1=\Theta(\frac{m_0}{\log{n}})$, we have $m_1=\Theta(\frac{n}{\log{n}})$, and thus $O(m_1 \cdot \log{m_1})$ is indeed smaller than $n$ for the appropriate choice of constants.

To show that w.h.p. we can turn each node black in every level $S_{2,i}$, we use an induction. Initially, we already know that w.h.p. we can turn each node in $S_1$ black. Furthermore, we will assume that the outputs of each node in every level are initially $\epsilon$-balanced. Note that since each node in this phase already has at least $c_0 \cdot \log{n}$ output neighbors, and there are at most $n$ nodes altogether, we can apply Lemma \ref{lem:epsbalanced_initially} to show that w.h.p. this claim holds in our graph.

Now let us consider a general level of the construction. Recall that for a general node $v$, we use $d$ to denote the degree to the previous level, which means that $v$ has $(1+\mu) \cdot d$ output edges and a total degree of $d_v=(2+\mu) \cdot d$. If the outputs are $\epsilon$-balanced initially, then at least $(\frac{1}{2}-\epsilon) \cdot (1 + \mu) \cdot d$ out of the $(1 + \mu) \cdot d$ outputs are already black initially. Our induction hypothesis states that we can turn all the $d$ previous-level neighbors of $v$ black. This altogether amounts to at least $(1 + (\frac{1}{2}-\epsilon) \cdot (1 + \mu)) \cdot d$ black neighbors. Thus to show that $v$ is switchable to black at this point, we need 
\[  \frac{(1 + (\frac{1}{2}-\epsilon) \cdot (1 + \mu)) \cdot d}{(2 + \mu) \cdot d} \; \geq \; \frac{1+\lambda}{2}\,.\]
After expansion and simplification, this gives $2 \cdot \lambda + 2 \cdot \epsilon \cdot (1-\mu) + \mu \cdot \lambda \, \leq \, 1$. For any value of $\lambda<\frac{1}{2}$, we can ensure this with a sufficiently small choice of $\mu$ and $\epsilon$. Hence after $S_{2,i-1}$ becomes black, we can also turn $S_{2,i}$ entirely black.

We point out that this growing phase construction does not require a new probabilistic statement with each new level: we only use the fact that $S_1$ can be switched entirely black w.h.p., and that the output neighborhood of each node is $\epsilon$-balanced initially (which follows from Lemma \ref{lem:epsbalanced_initially}). From this, the rest of our claims follow deterministically.

\subsection{Details of the Control Phase}

Intuitively, the base idea of the Control Phase is to make the output edges such an insignificant part of the neighborhood of $S_3$ that the switchability of the nodes $S_3$ is always controlled solely by the connections to $S_{2,\ell}^b$ and $S_{2,\ell}^w$. Since we have $\frac{1+\lambda}{2} = \frac{2}{3} -\delta$ for some constant $\delta>0$, we can achieve this by ensuring that the current conflicts to $S_{2,\ell}^b$ and $S_{2,\ell}^w$ always amount to almost $\frac{2}{3}$ of the total degree.

This phase already requires us to create two different copies of the previous $3$ phases. That is, besides the instance of the first three phases that allows us to switch each node in $S_{2,\ell}^b$ black, we also create another Opening, Collection and Growing Phase for the color white in a symmetric manner, which in the end allows us to switch all the nodes in the final set $S_{2,\ell}^w$ white. This only doubles the total number of nodes that we use for the first 3 phases, and thus it does not affect the magnitude of the final size of our construction.

We choose the size of these three-phase constructions such that the size of $S_{2,\ell}^b$ is $\alpha \cdot m$, while the size of $S_{2,\ell}^w$ is $2 \alpha \cdot m$. For simplicity, we choose $m_1$ to denote the size of the larger of the two sets, i.e. $2 \alpha \cdot m$. For the other copy of the first three phases (i.e the one ending with $S_{2,\ell}^b$), we in fact only require half as many nodes, i.e. levels of size $\frac{m_1}{2}$ in the Growing Phase.

Note that $S_{2,\ell}^b$ is the last level of a Growing Phase on $\alpha \cdot m$ nodes, so each node in $S_{2,\ell}^b$ has an output degree of at least $\frac{\alpha}{2} \cdot m$. Since we have $|S_3|=2m$, for a sufficiently large $\alpha$ (i.e. $\alpha \geq 4$), it is indeed possible to connect each node in $S_3$ to every node in $S_{2,\ell}^b$, as the nodes in $S_{2,\ell}^b$ do have a sufficiently large output degree for this. Thus we can indeed ensure that each node in $S_3$ has $\alpha \cdot m$ edges to $S_{2,\ell}^b$.

Similarly, $S_{2,\ell}^w$ is the last level of a Growing Phase on $2\alpha \cdot m$ nodes, hence each node in $S_{2,\ell}^w$ has an output degree of $\alpha \cdot m$ at least. Again, this output degree shows that a choice of $\alpha \geq 2$ allows us to connect each node in $S_{2,\ell}^w$ to all the $2m$ nodes in $S_3$.

Furthermore, note that we assume that for each node $v \in S_3$, the set of neighbors of $v$ in $S_{2,\ell}^w$ is initially $\epsilon$-balanced. Since $v$ has $2\alpha \cdot m$ neighbors in $S_{2,\ell}^w$ which is significantly larger than $\Theta(\log{n})$, we can easily make such an assumption; Lemma \ref{lem:epsbalanced_initially} shows that w.h.p. it holds for all nodes $v \in S_3$.

Also, we point out that this is a phase that we cannot generalize to larger $\lambda$ values up to $\frac{1}{2}$: the fraction $\frac{2\alpha \cdot (1-\epsilon)}{3\alpha +1}$ is upper-bounded by $\frac{2}{3}$, and any other configuration of connections to $S_{2,\ell}^b$ and $S_{2,\ell}^w$ would either not make $S_3$ switchable to black in the first place, or it would not be enough to switch it back to white later.

\subsection{Details of the Simulation Phase}

Most aspects of the Simulation Phase have already been discussed in Section \ref{sec:lower}. For each node $v$ in the \textsc{prop} construction, we add $\frac{1}{2} \cdot \frac{1+\lambda}{1-\lambda} \cdot d_v'$ new neighbors in both $S_3^{b}$ and $S_3^{w}$, which first allows us to force $v$ to take the desired initial color, and then to make the new part of the neighborhood balanced. This allows us to run the \textsc{prop} construction for $\lambda'=\frac{2 \lambda}{1-\lambda}$, providing a sequence of $m^{1+f( \frac{2 \lambda}{1-\lambda} ) -\varepsilon}$ steps for any $\varepsilon>0$. Since we have $m=\Omega(\frac{n}{\log{n}})$ and we can get rid of the logarithmic factor with a smaller choice of $\varepsilon$, this shows a lower bound of $n^{1+f( \frac{2 \lambda}{1-\lambda} ) -\varepsilon}$.

Recall that we have only discussed the Simulation Phase for the \textsc{prop} construction nodes that are initially black in the black box construction. In practice, we also need an entirely separate copy of the first 4 phases in order to set the initial color of the remaining \textsc{prop} nodes white. That is, we create another instance of the first 4 phases in a symmetric manner, similarly to the doubling step of the Control Phase. This now allows us to turn all levels in the Growing Phase white, and then obtain a set $S'_3$ where first every node can be switched to white, and then half of the nodes can be switched back to black. This again only doubles the number of nodes required in the first 4 phases, which does not affect the magnitude of the size of the graph.

One might also wonder if we can generalize this phase to larger $\lambda$ values by connecting our \textsc{prop} nodes to a fewer number of nodes in $S_3$, and instead using the fact that the nodes have an initially $\epsilon$-balanced neighborhood within the \textsc{prop} construction. However, our processes are sequential, and thus we could only apply this argument on the first nodes that are switched to their preferred initial color in the \textsc{prop} graph. The later nodes, on the other hand, will have a severely biased neighborhood due to the fact that we have already set many of their neighbors in the \textsc{prop} construction to the desired initial color.

\section{Brief discussion of the function $f(\lambda)$} \label{App:C}

For the sake of completeness, we also describe the function $f(\lambda)$ that was introduced in \cite{prop} and used in Theorem \ref{th:lower}.

The domain of $f$ is the open interval $\lambda \in (0,1)$, and the image of $f$ is also $(0,1)$. On $(0,1)$ the function $f$ is continuous, monotonously decreasing and convex, with $\lim_{\lambda \rightarrow 0}f(\lambda)=1$ and $\lim_{\lambda \rightarrow 1}f(\lambda)=0$ in the limits. As such, the bounds of $n^{1+f(\lambda) \pm \varepsilon}$ in \cite{prop} describe a transition from quadratic to linear behavior as $\lambda$ goes from $0$ to $1$.

The concrete formula of the function is given in terms of a parameter $\varphi$ such that $\varphi \in (0,\frac{1-\lambda}{2}]$. That is, the authors describe the stabilization time as a function of $\varphi$, and they show that stabilization time is maximal when the optimum $\varphi$ is chosen. In particular, the function $f$ is defined as
\[ f(\lambda) := \max_{\varphi \in (0,\frac{1-\lambda}{2}]} \; \frac{\log\left( \frac{1-\varphi}{\lambda+\varphi} \right)}{\log\left( \frac{1-\varphi}{\varphi} \right)} \, . \]
A derivative of this expression leads to an equation that cannot be solved with elementary methods, and thus there is no straightforward way to present $f(\lambda)$ in a simple closed form. The plot of $f(\lambda)$ is illustrated in Figure \ref{fig:f}.

Recall that in our lower bound presented in Section \ref{sec:lower}, we first apply the transformation $\lambda \rightarrow \frac{2 \lambda}{1 - \lambda}$, which maps the interval $(0,1)$ into $(0, \frac{1}{3})$; we only call the function $f$ after this transformation. The resulting function $f(\frac{2 \lambda}{1 - \lambda})$, as visible on the left side of Figure \ref{fig:func}, is a continuous, monotonously decreasing, convex function on the domain $(0, \frac{1}{3})$. The image of the function is the entire $(0,1)$, since we now have $\lim_{\lambda \rightarrow 0}f(\frac{2 \lambda}{1 - \lambda})=1$ and $\lim_{\lambda \rightarrow \frac{1}{3}}f(\frac{2 \lambda}{1 - \lambda})=0$ in the limits. As such, our lower bound exhibits a similar transition from quadratic to linear behavior on the interval $(0, \frac{1}{3})$.

\begin{figure}
\centering
	\includegraphics[width=0.63\textwidth]{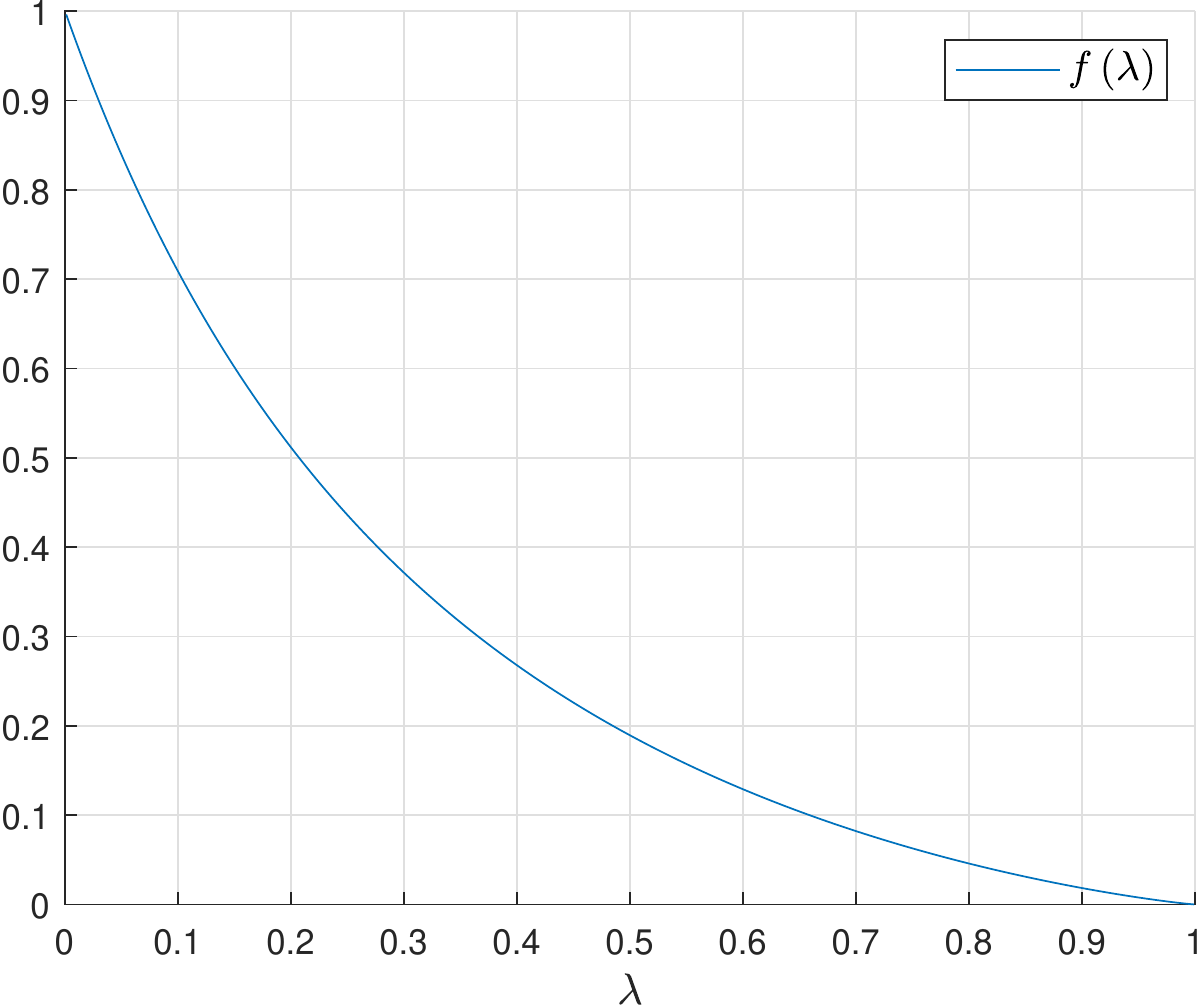}
	\caption{Illustration of the function $f(\lambda)$ introduced in \protect \cite{prop}.}
	\label{fig:f}
\end{figure}

\end{appendices}

\end{document}